\newcommand{\Int}[2]{\mathsf{Int}(#1,#2)}
\newcommand{\ra}{\rightarrow}
\newcommand{\impl}[1][]{\overset{#1}{\Longrightarrow}}
\newcommand{\coimpl}[1][]{\overset{#1}{\Longleftarrow}}
\newcommand{\wand}{-\!\!\!\ast}
\newcommand{\dast}{\overset{\rightarrow}{\ast}}
\newcommand{\disast}{\ast_d}
\newcommand{\dwand}{\overset{\rightarrow}{\wand}}
\newcommand{\comp}{\otimes}
\newcommand{\dcomp}{\overset{\rightarrow}{\comp}}
\newcommand{\discomp}{\comp_d}
\newcommand{\dom}{\mathrm{dom}}
\newcommand{\Comp}[1]{\mathrm{Comp}(#1)}
\newcommand{\Beh}[1]{\mathrm{Beh}(#1)}
\newcommand{\C}{\mathbb{C}}
\newcommand{\X}{\mathcal{X}}
\newcommand{\comment}[1]{}
\newcommand{\val}{\mathcal{V}}
\newcommand{\goes}[1]{\stackrel{#1}{\rightarrow}}
\begin{document}

\title{Minimalistic System Modelling: 
    Behaviours, \\ Interfaces, and Local Reasoning }

\author{Didier Galmiche \inst{1}\orcidID{0000-0002-9968-5323} \and
Timo Lang \inst{2}\orcidID{0000-0002-8257-968X} \and 
David Pym \inst{2,3}\orcidID{0000-0002-6504-5838} 
}

\authorrunning{Galmiche, Lang, and Pym}

\institute{Universit\'e de Lorraine, CNRS, LORIA, Nancy, France \email{didier.galmiche@loria.fr} \and 
University College London, UK \email{\{timo.lang,d.pym\}@ucl.ac.uk} \and 
Institute of Philosophy, University of London, UK \email{david.pym@sas.ac.uk}}

\maketitle

\begin{abstract}
The infrastructure upon which the functioning of society depends is composed of complex ecosystems of systems. Consequently, we must reason about the properties of such ecosystems, which requires that we construct models of them. There are very many approaches to systems modelling, typically building on complex structural and dynamic frameworks. For example, there are simulation modelling tools based on a theoretical treatment of a 
`distributed systems metaphor' for system modelling. These tools are based on the concepts of location, resource, and process. They are well developed
and have proved valuable in a range of settings. Arguably, however, the foundations of modelling technologies remain underdeveloped. Our purpose here is to explore a foundational modelling framework based on minimal assumptions, starting from a primitive notion of behaviour, and to show that such an approach allows the recovery of the key ideas, including a generalized CAP theorem, required for effective modelling of and reasoning about ecosystems of systems. We establish a logic of behaviours and use it to express local reasoning principles for the compositional structure of systems. 
\keywords{System Modelling \and Behaviour \and Logic \and The CAP Theorem \and Local Reasoning.}
\end{abstract}

\section{Introduction} \label{sec:introduction}

The aim of this paper is to propose and study an extremely simple but general framework for 
modelling \emph{distributed systems}, seen as a metaphor for ecosystems 
of systems. We stress that what we describe is a foundational conceptual framework for constructing models, not a specific model. 

There is a good deal of work on systems modelling in the literature, 
much of it --- such as \cite{Hillston96} and \cite{CP09,CMP10,CMP2012,CP15,AP16}, for example 
--- deriving from process algebra. Some of this work, such as \cite{CP15,CIP2022}, 
considers composition of models via notions of interface. Other work, such as 
\cite{SystemDynamics}, derives from the world of management science and 
organizational theory. Other approaches are based more immediately 
on notions of composition and interface and their dynamics. For example, 
interface automata \cite{AH2001b,AH2001a}, relational interfaces \cite{Tripakis2011}, 
and bigraphs, \cite{Milner2002,Milner2008}. Our approach differs from all of these 
(and others) in its desire to employ minimal assumptions about structure and dynamics, 
beginning with a very primitive notion of behaviour. Our aim is to explore the (arguably underdeveloped) \emph{foundations} of system modelling, not to replace existing frameworks and tools.

For the practical purposes of simulation modelling --- see, for example, \cite{BoK-MS} for a survey of the overall ideas --- behaviour can be captured, for example, by the dynamics of the many forms of process algebra. Examples include 
\cite{Milner1989,Hoare85,CP09,CMP10,CMP2012,AP16,caulfield2016mod,CIP2022,Hillston96} and many more, including those, described in various previous Simutools papers \cite{CMP10,caulfield2016mod,CIP2022} that are intended to support implementations of the `distributed systems metaphor' --- in which a system is modelled in terms of its \emph{locations}, \emph{resources}, and \emph{processes}, relative to its \emph{environment} --- as described in \cite{CP09,CMP10,CMP2012,AP16,caulfield2016mod,CIP2022}. The history of this work can be traced to Birtwistle's `Demos: Discrete Event Modelling on Simula' \cite{Bir79}. 

These approaches can be implemented by tools such as those described in languages such as Gnosis \cite{CMP10,CMP2012} and Julia \cite{BKSE2012,julialang}. In particular, the modelling approach based on the `distributed systems metaphor', is conveniently captured in Julia using the `SysModels' package that is  available from GitHub \cite{juliacode}. 

Indeed, the distributed systems metaphor notwithstanding, there are very many --- far too numerous to discuss here --- informal definitions of systems discussed across many disciplines. However, we can identify a common core of these definitions,  all supported by the distributed systems metaphor and its implementations, that will serve as a starting point for our work:  
\begin{quote}
    A system is something whose behaviour is reflected in the behaviour of its components.
\end{quote}
This definition is deceptively simple: it identifies, at least implicitly, not only \emph{behaviour} and \emph{components}, but also 
\emph{composition} (necessarily, of components) and so (again, necessarily) \emph{interfaces} between components. This definition amounts to an abstraction and generalization of the characterization of systems that is provided by the distributed systems metaphor (e.g., \cite{CP15,CIP2022}). 

In our framework, then, the notion of a `behaviour' will be a primitive. 
The standard model of a behaviour is a trace (finite or infinite), 
but we do not have to subscribe to this interpretation. Indeed, our 
primitive notion of behaviour requires not even a commitment to that 
degree of structure, although traces will be an important example. 
The architectural aspect of a system, namely its component  subsystems, will be considered a priori knowledge.

 
We will see that from this simple starting point --- that is, a primitive notion of behaviour --- a rather expressive theory 
emerges. In Section~\ref{sec:basic}, we introduce the main notions of our modelling approach. Essentially, we define what we consider to be a `system', and this will built on a primitive notion of \emph{behaviour}. After defining the crucial notion of an \emph{implementation} between systems, we then derive further well-known concepts from system modelling; in particular, \emph{interfaces} and \emph{composition}.

A thorough epistemological investigation of our choices of primitives is beyond the scope of this paper. However, we believe it is evident that we have identified a credible minimalistic position. 

In Section~\ref{sec:CAP}, by way of demonstrating the relevance of our approach, we use our framework to establish an abstract statement 
of the famous CAP theorem \cite{Brewer2012,Lynch2012}. Brewer's original conjecture \cite{brewer2000} is 
expressed rather generally. It concerns three key properties of distributed systems. We suppose 
a system with multiple, connected components that is intended to deliver a specified 
service. Informally:
\begin{itemize}
    \item[-] \emph{(C)onsistency}: all the system's clients see the same data at the 
same time, irrespective of the component of the system to which they connect; 
\item[-] \emph{(A)vailability}: every request for the service that is sent to the system receives 
a response that is not the `error' response;
\item[-] \emph{(P)artition-tolerance}: the system 
continues to deliver the service even if there are failures of the connections between 
its components.
\end{itemize}
Brewer stated that these three guarantees cannot be satisfied simultaneously in a distributed system. Gilbert and Lynch gave a formal formulation and proof \cite{GL2002} 
in a set-up that is quite concrete. Specifically, they give a formal statement of Brewer's conjecture 
for an account of the CAP properties in `web services'. In our framework, we are able 
to establish the CAP theorem in a conceptually quite general way. However, we show, in Appendix~\ref{app:CAP},  
that Gilbert and Lynch's formulation can be recovered from ours.  

In Section~\ref{sec:logic}, we introduce a primitive logic of behaviours in which 
formulae are satisfied relative to `behaviours as worlds'. We consider the basic
classical propositional connectives and modal operators, so obtaining a syntactic 
variant of {\bf S5}. We establish the equivalence of behavioural equivalence and logical 
equivalence (see, for example, \cite{vBB94,Stirling01,bdRV01}). We discuss 
local reasoning principles (in the sense, for example, of \cite{OHearn2019}) for 
the composition of systems. 
Finally, in Section~\ref{sec:discussion}, 
we discuss a range of directions for future research, including \emph{time} (Section~\ref{subsec:time}) and \emph{tableaux proof systems} for the logic  (Section~\ref{subsec:tableaux}). 

\section{The basic theory} \label{sec:basic}

Any formal model of a distributed system, including system models constructed through the `distributed systems metaphor' mentioned above,  must capture the following two key points:
\begin{itemize}
    \item[-] \textbf{The topology of the system:} How do the components of the system sit together?
    \item[-] \textbf{The dynamics of the system:} How does the system as a whole and its components behave?
\end{itemize}

Of these two, modelling the topology is unproblematic: We can usually take some graph-like structure, where edges denote potential communication channels between the parts of a system.

Modelling the dynamics of a system is a more subtle issue. It is usually not feasible to give a full realistic description of a system's behaviour; therefore one works with abstractions. The choice of abstraction is then dependent on the intended application. Some common formalizations of behaviour include \emph{state sets}, (finite or infinite) \emph{traces} and possibly stochastic extensions thereof.

We seek a common abstraction of these formalizations. To this end, it is useful to contemplate the relation between the topological and the dynamical aspect of a distributed system. 
We can always observe the following \emph{reflection principle}:
\begin{quote}
The global behaviour of a system is reflected in the local behaviour of its components.
\end{quote}
This principle connects the topological and dynamical aspects of a system by expressing a functional relationship between global and local behaviours, manifesting in various ways. 
For example, the global behaviour of an input/output system might be modelled by the trace $\langle read(x),write(y),write(y'),read(x')\rangle$ of interactions; at the component responsible for reading the inputs, this global behaviour will be reflected as $\langle read(x),read(x')\rangle$. Cyberphysical systems might have complex behaviours, including a measuring of temperature, and one simple component that only records whether the system is overheated; then the global behaviour of the system is reflected by one of the behaviours $overheat=0$ or $overheat=1$.

The above reflection principle forms the core of our modelling approach. It then turns out that we can leave the question what behaviours `really are' largely open. What matters is only the relation between global and local behaviours, as mediated by the reflection function.

\subsection{Systems and Implementations}

We assume a given collection $\C$ of \emph{basic components} that will form the building blocks of our distributed systems. Each $c\in \C$ comes equipped with a set $\Beh{c}$, whose elements we call \emph{behaviours}. For nonempty $C\subseteq\C$ we call $\Beh{C}:=\Pi_{c\in C}\Beh{c}$ a \emph{set of $C$-snapshots}.

\begin{definition}[system]
\label{def:system}\mbox{}
\begin{itemize}
    \item[-] A \emph{system} is a function whose codomain is a set of snapshots.
    \item[-] If $f:B\to \Beh{C}$ is a system, we call the elements of $\Beh{f}:=B$ and $\Comp{f}:=C$ the \emph{behaviours of $f$} and the \emph{components of $f$},  respectively.
\end{itemize}
\end{definition}

We think of a system as a set $B$ of high-level descriptions of system behaviours together with a translation $f$ into concrete component behaviours. In this way the definition of a system proposed here is not a fully semantic definition: We do allow, e.g., that $f(b_1)=f(b_2)$ for \emph{different} $b_1, b_2\in B$. Such behaviours then differ only by their high-level description, not by any observable factors component behaviour. The motiviation for allowing syntactic elements is that it facilitates the modelling of a system.

On the other hand, a fully semantic view of systems could be obtained by requiring $f$ to be injective. In this case, a system can be identified with a (labelled) subset of $\Beh{C}$.

\begin{example}
\label{ex:first}
   Consider two components $c,d\in \C$ with $\Beh{c}=\{x\}^*$ and $\Beh{d}=\{y,z\}^*$. Then every $S\subseteq \{x,y,z\}^*$ can be seen as the behaviours of a system $f$ with $\Comp{f}=\{c,d\}$ where $f$ projects on substrings, e.g. $f(xxyxzzxy)=(xxxx,yzzy)$. Note that this $f$ is not injective, as we have $f(xxyxzzxy)=f(xxxxyzzy)$.
\end{example}

Before coming to our second crucial definition, we introduce some notational conventions. If $f:B\ra \Beh{C}$ and $D\subseteq C$, we denote $f_D:B\ra\Beh{D}$ the composition of $f$ with the obvious projection function $\Beh{C}\ra\Beh{D}$. If $D$ is a singleton $\{d\}$ we  write $f_d$ instead of $f_{\{d\}}$. In Example~\ref{ex:first} we have $f_d(xxyxzzxy)=yzzy$. It will be useful to view sets of components as systems; for this, we identify $C\subseteq \C$ with the identity function $\Beh{C}\ra\Beh{C}$.

\begin{definition}[implementation]
\label{def:implementation}
Let $f,g$ be systems with $\Comp{g}\subseteq\Comp{f}$. An \emph{implementation of $g$ in $f$} is a function $\sigma:\Beh{f}\ra\Beh{g}$ such that $f_{\Comp{g}}=g\circ\sigma$.
\end{definition}

In other words, $\sigma:\Beh{f}\ra\Beh{g}$ is an implementation if the following diagram commutes:
\begin{center}
    \begin{tikzcd}[ampersand replacement=\&]
	\& {\Beh{f}} \\
	{\Beh{g}} \\
	\& {\Beh{E}}
	\arrow["\sigma"', from=1-2, to=2-1]
	\arrow["g"', from=2-1, to=3-2]
	\arrow["f_E", from=1-2, to=3-2]
\end{tikzcd}
$E:=\Comp{g}$
\end{center}
In this case we write $f\impl[\sigma] g$ and say that \emph{$f$ implements $g$ (via $\sigma$)}, or that $g$ is a \emph{subsystem} of $f$. If $f$ implements $g$, then the local behaviour of $f$ at $c\in\Comp{g}$ can always be explained in terms of the behaviour of $g$. We say that $\sigma$ \emph{realizes $y\in\Beh{g}$} if there is an $x\in\Beh{f}$ such that $\sigma(x)=y$. 

As a simple example of Definition~\ref{def:implementation}, systems implement their components: For $C\subseteq\Comp{f}$ we have $f\impl[f_C] C$ because $f_C=id_C\circ f_C$. Here we have used the identification of $C$ with the system $id_C$ as described above.

It should be stressed that $f\impl[\sigma] g$ does \emph{not} mean that $f$ is in control of $g$. While $\sigma$ denotes a relation between the behaviours of $f$ and $g$, this relation is not necessarily causation. In fact, our model does not have an explicit notion of causation, say via input and output polarities (inputs react, outputs act). On the other hand, such notions can be derived from the set-up, for example:

\begin{definition}\label{def:input}
    If $f\impl[\sigma] g$ and $\sigma$ is \emph{surjective}, then $\sigma$ is called an \emph{input implementation}. $E\subseteq\Comp{f}$ is an \emph{input of $f$} if $f_E$ is an input implementation.
\end{definition}

The intuition is the following: if $\sigma$ is surjective, then every behaviour of $g$ is realized by $\sigma$. Thus $g$ is not constrained as a subsystem of $f$, and we can think of $f$ as merely `observing' the behaviour of $g$ and possibly reacting to it.

We end this section with a simple yet fundamental example that illustrates how a basic communication channel between two systems can be modelled in our framework. A general treatment of the composition of systems follows in the next section.

\begin{example}[message passing]
\label{ex:message}
Let $\Sigma$ be an alphabet, and let $c$ be a `communication component' with $\Beh{c}=\Sigma^*$; that is, behaviours are finite $\Sigma$-strings. Now for some arbitrary set $G\neq\emptyset$ of additional behaviours, let $g$ be a system with $\Beh{g}\subseteq G\times \{send(s)\mid s\in \Sigma^*\}$ that implements $c$ via $g_c(g,send(s))=s$. Furthermore let $f$ be another system with $\Beh{f}\subseteq F\times \{receive(s)\mid s\in\Sigma^*\}$ that implements $c$ via $f_c(f,receive(s))=s$ and where $F\cap G=\emptyset$, and assume moreover that $c$ is an input of $f$: for all $s\in\Sigma^\ast$, there exists $x\in f$ such that $(x,receive(s))\in\Beh{f}$.

We now construct a system $h$ that allows message passing from $g$ to $f$. For this let $\Comp{h}=\Comp{f}\cup\Comp{g}$ and let $\Beh{h}$ contain those triples $( y,transmit(s),x)$ such that $(y,send(s)) \in\Beh{g}$ and $(x,reveice(s))\in\Beh{f}$. Then $h$ implements both $f$ and $g$, and it stipulates that the messages that $g$ sends are those that $f$ receives. 
\end{example}

We conclude this section by introduction a notion of \emph{system equivalence}.\begin{definition}\label{def:equivalence}
    $f$ and $g$ are equivalent, written $f\equiv g$, if $f\impl g$ and $g\impl f$.
\end{definition}







\subsection{Compositions}




\begin{definition}[environment and composition]
\label{def:composition}
    An \emph{environment for for $g_1$ and $g_2$} is any system $f$ implementing both $g_1$ and $g_2$. An environment $f$ is a \emph{composition} if $\Comp{f}=\Comp{g_1}\cup\Comp{g_2}$.
\end{definition}

We 
write $g_1 \!\coimpl[\sigma_1] \! f \!\impl[\sigma_2] \! g_2$ to denote a system $f$ implementing both $g_1$ and $g_2$.

\begin{definition}
[interface]
\label{def:interface}
    $\Int{g_1}{g_2}:=\Comp{g_1}\cap\Comp{g_2}$ is called the \emph{interface} of $g_1$ and $g_2$. 
\end{definition}

If $g_1\coimpl[\sigma_1] f\impl[\sigma_2] g_2$ and $I:=\Int{g_1}{g_2}$ we have the following commutative diagram, assuming $I\neq\emptyset$:

\begin{center}{\small
\begin{tikzcd}[ampersand replacement=\&]
	\& {\Beh{f}} \\
	{\Beh{g_1}} \&\& {\Beh{g_2}} \\
	\& {\Beh{I}}
	\arrow["\sigma", from=1-2, to=2-1]
	\arrow["\rho"', from=1-2, to=2-3]
	\arrow["{(g_1)_I}", from=2-1, to=3-2]
	\arrow["{(g_2)_I}"', from=2-3, to=3-2]
	\arrow["{f_I}"{description}, from=1-2, to=3-2]
\end{tikzcd}}
\end{center}


Call $(y_1,y_2)\in\Beh{g_1}\times\Beh{g_2}$ \emph{compatible} if $I=\emptyset$ or $(g_1)_{I}(y_1)=(g_2)_{I}(y_2)$. One is often interested in compositions which are `derived' from the systems $g_1$ and $g_2$; such compositions impose only those restrictions on its subsystems that are necessary to compose them. This is captured in the next definition.

\begin{definition}[free composition]
A composition $g_1\coimpl[\sigma_1] f\impl[\sigma_2] g_2$ is called \emph{free} if for every compatible pair $(y_1,y_2)\in\Beh{g_1}\times\Beh{g_2}$ there exists $x\in\Beh{f}$ such that $\sigma_1(x)=y_1$ and $\sigma_2(x)=y_2$.
 \end{definition}

For example, the system $h$ from Example~\ref{ex:message} is a free composition.
Given two systems $f,g$, there is a canonical free composition of $f$ and $g$ whose behaviours are exactly the compatible pairs and which has projections as implementation functions. We denote this composition by $f\comp g$; it is essentially the category-theoretic \emph{pullback} of the maps $(g_1)_I$ and $(g_2)_I$.\footnote{Pullbacks as models of synchronization also appear in \cite{pullback}.} 

\begin{definition}[canonical free composition]
\label{def:freecomp}
    Let $f,g$ be two systems. Let $I:=\Int{f}{g}$. $f\comp g$ is the composition of $f$ and $g$ whose behaviour set is $\{(x,y)\in\Beh{f}\times\Beh{g}\mid I=\emptyset\lor f_I(x)=g_I(y)\}$,  whose function is given by
    \[
    (f\comp g)_c(x,y)=
    \begin{cases}
    f_c(x)\quad\text{if }c\in\Comp{f} \\
    g_c(y)\quad\text{if }c\in\Comp{g}\setminus\Comp{f}
    \end{cases}
    \]
\end{definition}
Note that if $c\in\Int{f}{g}$, then $(f\comp g)_c(x,y)=f_c(x)=g_c(y)$ by compatibility.

We say that $f$ is \emph{runnable} if $\Beh{f}\neq\emptyset$ (or equivalently, $f\neq\emptyset$). The existence of at least one compatible pair is the minimal requirement for runnable compositions:

\begin{lemma}
\label{lem:composable}
If $\Int{f}{g}\neq\emptyset$, a runnable composition of $f$ and $g$ exists iff there is a compatible pair.
\end{lemma}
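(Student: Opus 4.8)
The plan is to prove the two directions of the biconditional separately, using the canonical free composition $f\comp g$ for one direction and a direct unwinding of the implementation condition for the other. Throughout, write $I:=\Int{f}{g}$, which is nonempty by hypothesis, so that compatibility of a pair $(x,y)\in\Beh{f}\times\Beh{g}$ means precisely $f_I(x)=g_I(y)$, with no disjunctive case to consider.

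For the implication ``a compatible pair exists $\Rightarrow$ a runnable composition exists'', I would simply invoke the canonical free composition $f\comp g$ of Definition~\ref{def:freecomp}. Its behaviour set is, since $I\neq\emptyset$, exactly the set of compatible pairs, and it is a composition of $f$ and $g$ by construction. Hence if even one compatible pair exists, then $\Beh{f\comp g}\neq\emptyset$, so $f\comp g$ is itself a runnable composition witnessing the claim. This direction is essentially immediate once the canonical free composition is in hand; no freeness beyond its defining behaviour set is needed.

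For the converse, suppose $h$ is a runnable composition of $f$ and $g$, say $f\coimpl[\sigma_1]h\impl[\sigma_2]g$. Since $h$ is runnable, $\Beh{h}\neq\emptyset$, so I pick any $z\in\Beh{h}$ and set $x:=\sigma_1(z)\in\Beh{f}$ and $y:=\sigma_2(z)\in\Beh{g}$. I claim $(x,y)$ is compatible. The implementation condition $h_{\Comp{f}}=f\circ\sigma_1$, projected further onto the interface $I\subseteq\Comp{f}$, yields $h_I=f_I\circ\sigma_1$; symmetrically $h\impl[\sigma_2]g$ gives $h_I=g_I\circ\sigma_2$. Evaluating both equalities at $z$ produces $f_I(x)=h_I(z)=g_I(y)$, which is exactly the compatibility condition.

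The only mildly delicate point, and the step I would be most careful about, is the claim that projecting the implementation equation onto $I$ is legitimate, i.e.\ that $h_I=f_I\circ\sigma_1$ really follows from $h_{\Comp{f}}=f\circ\sigma_1$. This needs the observation that the canonical projection $\Beh{\Comp{f}}\to\Beh{I}$ post-composed with $h_{\Comp{f}}$ is $h_I$, and post-composed with $f\circ\sigma_1$ is $f_I\circ\sigma_1$; that is, the projections compose as expected, which is where the argument actually uses the definition of $f_D$ as the composite of $f$ with the obvious projection. This is a routine verification about products of behaviour sets, and everything else reduces to bookkeeping, so I do not anticipate any genuine obstacle.
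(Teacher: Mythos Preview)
Your proposal is correct and follows essentially the same argument as the paper's proof: both directions use, respectively, the canonical free composition $f\comp g$ and the chain $f_I(\sigma_1(z))=h_I(z)=g_I(\sigma_2(z))$ for some $z\in\Beh{h}$. Your extra paragraph justifying $h_I=f_I\circ\sigma_1$ via composition of projections is a welcome clarification of a step the paper leaves implicit.
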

\begin{proof}
If some compatible pair exists, then $f\comp g$ as described in Def.~\ref{def:freecomp} above is runnable. Conversely, let $f\coimpl[\sigma]h\impl[\rho] g$ be any runnable composition of $f,g$ and pick $z\in\Beh{h}$. Let $x=\sigma(z)$, $y=\rho(z)$, and $I=\Int{f}{g}$. Then, by the properties of implementations, we can show that $(x,y)$ is a compatible pair: $f_I(x)=f_I(\sigma(z))=h_I(z)=g_I(\rho(z))=g_I(y)$.
\end{proof}

We can show that `directed' compositions are always runnable:

\begin{lemma}
Assume that $g$ is runnable. If $\Int{f}{g}=\Comp{e}$ and the system $e$ is an input of $f$ (see Def.~\ref{def:input}), then there is a runnable composition of $f$ and $g$. 
\end{lemma}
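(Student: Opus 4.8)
The plan is to reduce runnability of a composition to the existence of a single compatible pair, and then to manufacture such a pair from the surjectivity that is packaged in the input hypothesis. As the witnessing composition I would take the canonical free composition $f\comp g$ of Definition~\ref{def:freecomp}. Since $\Int{f}{g}=\Comp{e}$ and every system has a nonempty component set, the interface $I:=\Int{f}{g}$ is nonempty, so Lemma~\ref{lem:composable} applies and $f\comp g$ is runnable exactly when some compatible pair $(x,y)\in\Beh{f}\times\Beh{g}$ exists. Thus the entire task collapses to exhibiting one such pair (equivalently, one element of $\Beh{f\comp g}$).

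The second step is to unpack the hypothesis that the system $e$ is an input of $f$. Reading this through Definition~\ref{def:input} together with the paper's identification of a component set with its identity system, the assumption says precisely that $f_I:\Beh{f}\ra\Beh{I}$ is surjective: an input implementation is a surjective implementation, and the implementation of $I$ (as $id_I$) in $f$ is $f_I$ itself. This surjectivity is the only content of the directedness assumption I will use. Now I invoke runnability of $g$: since $\Beh{g}\neq\emptyset$, I fix some $y\in\Beh{g}$ and form its interface snapshot $g_I(y)\in\Beh{I}$. By surjectivity of $f_I$ there is an $x\in\Beh{f}$ with $f_I(x)=g_I(y)$. Because $I\neq\emptyset$, the equation $f_I(x)=g_I(y)$ is exactly the compatibility condition, so $(x,y)$ is a compatible pair, hence $(x,y)\in\Beh{f\comp g}$ and $f\comp g$ is the desired runnable composition.

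The step carrying the weight here is not any calculation but the correct reading of ``the system $e$ is an input of $f$'': the lemma is, at bottom, the observation that the input (surjectivity) direction of the interface is precisely what allows an arbitrary interface behaviour arriving from the runnable side $g$ to be lifted back into $f$. Once this identification is made, everything else is routine bookkeeping. The two points I would take care to check explicitly are that the hypothesis genuinely forces $I\neq\emptyset$ (so that the nonvacuous branch of the compatibility clause and Lemma~\ref{lem:composable} both apply), and that ``input'' is interpreted as surjectivity of $f_I$ onto all of $\Beh{I}$, which is exactly what delivers $g_I(y)$ in the image of $f_I$ for the chosen $y$.
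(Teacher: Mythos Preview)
Your proposal is correct and follows essentially the same approach as the paper: pick $y\in\Beh{g}$ by runnability, use the input (surjectivity) hypothesis to find $x\in\Beh{f}$ with $f_I(x)=g_I(y)$, and invoke Lemma~\ref{lem:composable}. The paper phrases this via implementations $f\impl[\rho]e$ and $g\impl[\sigma]e$ and the chain $f_I(x)=e(\rho(x))=e(\sigma(y))=g_I(y)$, which reduces to your direct argument when one reads $e$ as $id_I$ (so $\rho=f_I$, $\sigma=g_I$); your explicit justification that $I=\Comp{e}\neq\emptyset$ is a nice touch the paper leaves implicit.
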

\begin{proof}
Let $f\impl[\rho]e$ be an input implementation and $g\impl[\sigma]e$ any implementation. Since $g$ is runnable, there exists $y\in\Beh{g}$, and as $e$ is an input component we can find $x\in\Beh{f}$ such that $\rho(x)=\sigma(y)$. For $I:=\Comp{e}$ we consequently have $f_I(x)=e(\rho(x))=e(\sigma(y))=g_I(y)$ and so $(x,y)$ is a compatible pair. Thus $f$ and $g$ are composable by Lemma~\ref{lem:composable}.
\end{proof}

\begin{example}
\label{ex:traces}
    Let $f,g$ be systems with $\Beh{f}\subseteq \{a,b,c\}^*$ and $\Beh{g}\subseteq \{b,c,d\}^*$, both closed under inital segments (think traces of actions), and let $\Int{f}{g}=\{e\}$ where $\Beh{e}=\{b,c\}^*$. Assuming that $f$ and $g$ implement $e$ via the obvious projections, $\Beh{f\comp g}$ consists of pairs $(s_1,s_2)\in\{a,b,c\}^*\times\{b,c,d\}^*$ whose restriction to the common actions $\{b,c\}$ coincides. For example, if $x=\langle a,b,b,c,a,b\rangle\in\Beh{f}$ and $y=\langle b,d,b,c,d,b\rangle\in\Beh{g}$ then $(x,y)\in\Beh{f\comp g}$, as their restriction to $\{b,c\}$ coincides in the trace $\langle b,b,c,b\rangle$.
    We can interpret this as a partially asynchronuous computation: 
\end{example}
    \begin{center}{\small 
        \begin{tikzcd}[ampersand replacement=\&]
	{f:} \& a \& b \&\& b \& c \& a \& b \\
	{g:} \&\& b \& d \& b \& c \& d \& b
	\arrow[Rightarrow, no head, from=1-3, to=2-3]
	\arrow[Rightarrow, no head, from=1-5, to=2-5]
	\arrow[Rightarrow, no head, from=1-6, to=2-6]
	\arrow[Rightarrow, no head, from=1-8, to=2-8]
	\arrow[from=1-2, to=1-3]
	\arrow[from=1-3, to=1-5]
	\arrow[from=1-5, to=1-6]
	\arrow[from=1-6, to=1-7]
	\arrow[from=1-7, to=1-8]
	\arrow[from=2-3, to=2-4]
	\arrow[from=2-4, to=2-5]
	\arrow[from=2-5, to=2-6]
	\arrow[from=2-6, to=2-7]
	\arrow[from=2-7, to=2-8]
\end{tikzcd}}
    \end{center}

\begin{example}[heaps]
Take as basic components a set $L$ of \emph{locations}, with $\Beh{l}=\mathbb{N}$ for $l\in L$.  For $L_0\subseteq^{fin} L$, $S(L_0)$ is the system whose behaviours are the $L_0$-based \emph{heaplets}, i.e.~functions $h:L_0\ra \mathbb{N}$. $S_0$ implements every $l\in L_0$ via the function $h\mapsto h(l)$. An $L_0$-stream is the system whose behaviours are sequences $\sigma$ of commands $l:=n$ where $l\in L_0$ and $n\in\mathbb{N}$. It implements $S_0$ via the function that sends $\sigma$ to the the heaplet $h$ where $h(l)$ is the value $n$ of the last command $l:=n$ in $\sigma$. Let $L_0$ be the disjoint union of $L_1$ and $L_2$. Then every $L_0$-stream is the free composition of an $L_1$-stream and an $L_2$-stream.
\end{example}

In Example~\ref{ex:traces}, nothing is said about the temporal relation between actions outside the interface, and so we have a model of `true' asynchronicity. On the other hand, if $f\coimpl[\sigma] h\impl[\rho] g$ is any free composition, then one can show that $h$ factors through $\Beh{f\comp g}$ via a \emph{surjective} map:

{\small \begin{center}
\begin{tikzcd}[ampersand replacement=\&]
	\& {\Beh{h}} \\
	\& {\Beh{f\comp g}} \\
	{\Beh{f}} \&\& {\Beh{g}} \\
	\& {\Beh{I}}
	\arrow["f_I"', from=3-1, to=4-2]
	\arrow["g_I", from=3-3, to=4-2]
	\arrow[two heads, from=1-2, to=2-2]
	\arrow["(f\comp g)_I", from=2-2, to=4-2]
	\arrow[from=2-2, to=3-1]
	\arrow[from=2-2, to=3-3]
	\arrow["\sigma", from=1-2, to=3-1]
	\arrow["\pi", from=1-2, to=3-3]
\end{tikzcd}
\end{center}}

 Intuitively, this means that in any free composition $h$, $\Beh{h}$ extends $\Beh{f\comp g}$ with some extra information. This can be seen as the information about how a pair $(x,y)$ of compatible behaviours is combined together: is it sequential,  synchronous, or asynchronous?

\begin{example}
    Let $f$ and $g$ be as in Example~\ref{ex:traces}. Let $h$ be a system whose behaviours are all elements of $\{a,b,c,d\}^*$ whose restriction to actions in $\{a,b,c\}$ and $\{b,c,d\}$ lies in $\Beh{f}$ and $\Beh{g}$, respectively. $h$ is a free composition of $f$ and $g$ via the obvious projection maps. Assuming again that $x=\langle a,b,b,c,a,b\rangle\in\Beh{f}$ and that $y=\langle b,d,b,c,d,b\rangle\in\Beh{g}$ we now have, for example, that $\langle a,b,d,b,c,a,d,b\rangle\in\Beh{h}$. Thus $h$ `linearizes' the behaviours $x$ and $y$.
\end{example}

\section{Implementation Guarantees and the CAP theorem}
\label{sec:CAP}

The aim of this section is to prove a formal version of the Brewer's CAP theorem. Recall that the CAP theorem states that the three guarantees of \emph{consistency}, \emph{availability} and \emph{partition-tolerance} cannot be simultanously satisfied in a distributed system (cf.~the introduction to this article).

We start by formalizing the notion of an \emph{implementation guarantee}:

\begin{definition}[implementation guarantee]
An \emph{implementation guarantee} for a system $g$ is a set $\X\subseteq\mathcal{P}(\Beh{g})$. 
The implementation $f\impl[\sigma]g$ satisfies $\X$, written $\sigma\models \X$, if $\{\sigma(x)\mid x\in\Beh{f}\}\in \X$.
\end{definition}

\begin{definition}[consistency]
    For 
    $C\subseteq\Beh{g}$, the implementation guarantee
    $\{X\subseteq\Beh{g}\mid X\subseteq C\}$
     is called \emph{$C$-consistency guarantee}.
\end{definition}
Simply put, an implementation guarantee satisfies $C$-consistency if it only realizes behaviours in $C$. To avoid trivial solutions to consistency, one also needs to consider guarantees that sufficiently many behaviours are realized. We formula such \emph{availability guarantees} relative to a binary relation $R\subseteq\Beh{g}^2$. Henceforth $R(x,\bullet):=\{y\mid xRy\}$ and $dom(R):=\{x\mid R(x,\bullet)\neq\emptyset\}$.

\begin{definition}[availability]
    For a binary relation $R$ on $Beh(g)$, the implementation guarantees
    \begin{align*}
     \{X\subseteq \Beh{g}\mid\forall x\in X(x\in\dom(R)\text{ implies }\exists y\in X\cap R(x,\bullet)\}
     \\
     \{X\subseteq \Beh{g}\mid\forall x\in X (R(x,\bullet)\subseteq X)\}
    \end{align*}
    are called \emph{weak} and \emph{strong $R$-availability},  respectively.
\end{definition}
More directly, if an implementation realizes some behaviour $x\in \dom(R)$, then weak (resp.\ strong) $R$-availability states that it must also realize some (resp.\ all)  $y\in R(x,\bullet)$. Strong availability corresponds to the \emph{input-enabledness} condition used in I/O-automata~\cite{lynch1988}.

\begin{definition}[partition-tolerance]
    Let $f\impl[\sigma_1]g_1$ and $f\impl[\sigma_2]g_2$. The implementation guarantee
    \[
    \{
    X\subseteq\Beh{f}\mid 
    \forall x_1,x_2\in X \exists x\in X \sigma_1(x)=\sigma_1(x_1),\sigma_2(x)=\sigma_2(x_2)
    \}
    \]
    is called \emph{$(\sigma_1,\sigma_2)$-partition-tolerance}.
\end{definition}

Partition-tolerance states that every pair $\sigma_1(x_1)$ and $\sigma_2(x_2)$ of realized behaviours of $g_1$ and $g_2$, where possibly $x_1 \neq x_2$, can also be \emph{jointly} realized by some single behaviour $x$. In other words, implemented behaviours in $g_1$ never restrict implemented behaviours in $g_2$ and vice versa. This means that no communication (that is, `message passing') between the components $g_1$ and $g_2$ is necessary to run the system (neither has to wait for the other).

The final definition needed to state the CAP theorem is that of entanglement. We think here of a situation where a system has two subsystems, comes with a consistency guarantee, and there are two relations modelling different ways of evolving the system. Entanglement then denotes that, for every initial behaviour and evolution to an intermediate behaviour via the first relation, there is a further evolution into a `problematic' behaviour via the second relation such that the following holds: 
\begin{quote}
Whenever two behaviours look like the problematic behaviour from the perspective of the first subsystem, but look like the behaviour of the initial and intermediate behaviour from the perspective of the second subsystem, then at at least one of them fails consistency.
\end{quote}

\begin{definition}[entanglement]
Let $f\impl[\sigma_1]g_1$ and $f\impl[\sigma_2]g_2$ and $C\subseteq\Beh{f}$. Two relations $R,S\subseteq\Beh{f}^2$ of behaviours are \emph{$C$-entangled over $(\sigma_1,\sigma_2)$} if they satisfy the condition
    \begin{align*}
    \label{eq:incomp}
    \forall w\in\Beh{f}\exists x\in R(w,\bullet)\cap dom(S)\forall v\in S(x,\bullet)\forall y,z\in\Beh{f}:\\        \sigma_1(y)=\sigma_1(z)=\sigma_1(v)\land\sigma_2(y)=\sigma_2(w)\land\sigma_2(z)=\sigma_2(x) \\
        \Rightarrow y\notin C \lor z\notin C
        \tag{$\ast$}
    \end{align*}
which is illustrated by the following picture:
\begin{center}
\begin{tikzcd}[ampersand replacement=\&]
	w \&\& x \&\& v \\
	\&\& z \\
	y
	\arrow["{\exists R}", curve={height=-12pt}, dashed, from=1-1, to=1-3]
	\arrow["{\forall S}", curve={height=-12pt}, dashed, from=1-3, to=1-5]
	\arrow["{=_{\sigma_2}}"', no head, from=1-3, to=2-3]
	\arrow["{=_{\sigma_1}}", no head, from=2-3, to=1-5]
	\arrow["{=_{\sigma_1}}"{pos=0.3}, curve={height=30pt}, no head, from=3-1, to=1-5]
	\arrow["{=_{\sigma_2}}", no head, from=3-1, to=1-1]
\end{tikzcd} 
\end{center}
\end{definition}

We now state and prove a formalized version of the CAP theorem.

\begin{theorem}[generalized CAP theorem]
\label{thm:CAP}
    Let $f\impl[\sigma_1]g_1$ and $f\impl[\sigma_2]g_2$. Let $C\subseteq\Beh{f}$ and assume that $R,S\subseteq\Beh{f}^2$ are $C$-entangled over $(\sigma_1,\sigma_2)$.

    \medskip 
    \noindent Then every runnable implementation of $f$ must violate at least one of the following guarantees:
   \begin{enumerate}
       \item[-] $C$-consistency
       \item[-] strong $R$-availability and weak $S$-availability
       \item[-] $(\sigma_1,\sigma_2)$-partition-tolerance
   \end{enumerate}
\end{theorem}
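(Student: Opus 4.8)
The plan is to argue by contradiction. Suppose some runnable implementation $f'\impl[\tau]f$ satisfied all three guarantees simultaneously, and write $X:=\{\tau(x)\mid x\in\Beh{f'}\}\subseteq\Beh{f}$ for its set of realized behaviours. Runnability gives $\Beh{f'}\neq\emptyset$, hence $X\neq\emptyset$; fix some $w\in X$. Unpacking the satisfaction condition $\tau\models\X$ for each guarantee, the four hypotheses become statements about $X$: $X\subseteq C$ (consistency); $R(u,\bullet)\subseteq X$ for every $u\in X$ (strong $R$-availability); $X\cap S(u,\bullet)\neq\emptyset$ whenever $u\in X\cap\dom(S)$ (weak $S$-availability); and for all $u_1,u_2\in X$ there is $u\in X$ with $\sigma_1(u)=\sigma_1(u_1)$ and $\sigma_2(u)=\sigma_2(u_2)$ (partition-tolerance). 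The goal is to manufacture, from these, a pair $(y,z)$ that witnesses the conclusion of the entanglement condition $(\ast)$ while lying entirely inside $X$, which then clashes with consistency.

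The key is that the alternating quantifier block $\forall w\,\exists x\,\forall v\,\forall y,z$ of $(\ast)$ is discharged step by step by these very guarantees. First I would feed the chosen $w\in X$ into $(\ast)$, obtaining a witness $x\in R(w,\bullet)\cap\dom(S)$ for which the inner implication holds universally in $v,y,z$. Strong $R$-availability then forces $x\in X$, since $w\in X$ and $x\in R(w,\bullet)$. As $x\in X\cap\dom(S)$, weak $S$-availability supplies an element $v\in X\cap S(x,\bullet)$, so the inner universal statement of $(\ast)$ may be instantiated at this specific $v$. It remains to produce $y$ and $z$ meeting the $\sigma$-constraints of $(\ast)$, and this is exactly the role of partition-tolerance: applied to the pair $(v,w)\in X^2$ it yields $y\in X$ with $\sigma_1(y)=\sigma_1(v)$ and $\sigma_2(y)=\sigma_2(w)$, and applied to $(v,x)\in X^2$ it yields $z\in X$ with $\sigma_1(z)=\sigma_1(v)$ and $\sigma_2(z)=\sigma_2(x)$. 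Hence $\sigma_1(y)=\sigma_1(z)=\sigma_1(v)$, $\sigma_2(y)=\sigma_2(w)$ and $\sigma_2(z)=\sigma_2(x)$, which is precisely the antecedent of the implication in $(\ast)$. The conclusion of $(\ast)$ therefore gives $y\notin C$ or $z\notin C$; but $y,z\in X\subseteq C$ by consistency, a contradiction.

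I expect the main obstacle to be essentially bookkeeping rather than conceptual: the real content is recognizing that the existential $x$ in $(\ast)$ is pinned into $X$ by strong availability of $R$, that the universal $v$ is instantiated at a point weak availability of $S$ guarantees to lie in $X$, and that partition-tolerance is exactly what lets the two ``cross'' behaviours $y$ and $z$ be realized jointly within $X$. The delicate part is keeping straight which guarantee justifies each of the memberships $x,v,y,z\in X$ and checking that each instantiation respects the scope of the quantifiers in $(\ast)$. No closure, fixpoint, or limiting argument is needed: a single starting behaviour $w$ suffices, and the contradiction is reached after finitely many instantiations.
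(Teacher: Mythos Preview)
Your proof is correct and follows essentially the same route as the paper's: pick a realized $w$, use entanglement to get $x$, push $x$ into the realized set via strong $R$-availability, obtain $v$ by weak $S$-availability, then use partition-tolerance twice to manufacture $y,z$ and contradict consistency. The only cosmetic difference is that the paper phrases it as ``assume (2) and (3), derive the failure of (1)'' rather than as a full reductio, and is terser about unpacking the guarantees into statements about the image set $X$.
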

\begin{proof}
    Let $h\impl[\sigma] f$ be any implementation of $f$, and assume that it satisfies (2) and (3). We show that $\sigma \nvDash\text{$C$-consistency}$.  
    Let $w\in\Beh{f}$ be any behaviour realized by $\sigma$, and pick $x$ as in the entanglement condition. By strong $R$-availability, $\sigma$ realizes $x$. By weak $S$-availability and the fact that $x\in dom(S)$, $\sigma$ also realizes some $v\in S(x,\bullet)$. By $(\sigma_1,\sigma_2)$-partion-tolerance $\sigma$ also realizes behaviours $y,z$ such that $\sigma_1(y)=\sigma_1(z)=\sigma_1(v)$ and $\sigma_2(y)=\sigma_2(w)$, $\sigma_2(z)=\sigma_2(x)$. But then by entanglement, 
    $\sigma$ realizes a behaviour that is not in $C$, violating $C$-consistency.
\end{proof}


Of course, the usual formulation of the CAP theorem (cf.~\cite{Lynch2012}) does not involve a complicated entanglement condition. This is because it deals with a concretely given set of implementation guarantees. Our formulation on the other hand starts from arbitrary guarantees derived from the sets $C$, $R$, and $S$, none of which needs to be a `read' or `write' guarantee as in~\cite{Lynch2012}. The entanglement condition is a logical abstraction of the relation between read and write guarantees in CAP that allows the proof to go through. 
To verify that Theorem~\ref{thm:CAP} is indeed a generalization of CAP, we show in Appendix~\ref{app:CAP} how to recover a standard form of the CAP theorem from it. This also shows in detail how a more traditional process-based view of systems can be modelled in our behaviour-based view.

\section{Logic and local reasoning principles} \label{sec:logic}

We now describe a simple logic of systems, and use this logic to state some local reasoning principles. 

Our logic is one of system behaviours, asserting properties of behaviours through a satisfaction relation between systems and formulae of the form 
\[
    f, \val , x \models \phi 
\]
where $f$ is a system, $x$ is a behaviour of $f$, and 
$\mathcal{V}$ is a valuation. This relation will be defined formally below. 

The idea of such a logic is related to logics 
of state in the sense of van Benthem-Bergstra and Hennessy-Milner (commonly known as `Hennessy-Milner' theorems) --- see, for example, \cite{vBB94,Milner1989,Stirling01} --- in which a process with state $E$ is judged to satisfy a property $\phi$, denoted $E \models \phi$. In this setting, `action modalities' such as $[a]\phi$, where $a$ denotes a basic action, play a key role in connecting the processes and the logic: 
\[
\begin{array}{rcl}
  E \models [a]\phi & \mbox{iff} & 
    \mbox{for all evolutions $E \goes{a} E'$, 
        $E' \models \phi$} 
\end{array}
\]
The key result in such a setting is that bisimulation equivalence of processes is equivalent to their logical equivalence. 

In our setting, we establish what may be understood as a corresponding result between systems with equivalent behaviours and their logical properties. Our result does not, however, require the presence of action modalities. Behaviours are a basic component of models and the formula $\Box \phi$ simply 
characterizes all of the behaviours of the system. This is explained formally in 
Section~\ref{subsec:universal}.

We employ also structural connectives to characterize 
the compositional structure of systems and, through a key concept of 
\emph{interface}, how the behaviour of one component of a system affects a component to which it is connected. We introduce `local reasoning' principles, borrowing the name from Separation Logic \cite{IO01,Reynolds2002,OHearn2019}, to characterize these 
ideas. Connectives of this kind are also present in logics of state for processes; see, for example, 
\cite{Dam89,CP2009,CMP2012,AP16}).

\subsection{The elementary logic}

For every basic component $c\in\C$, we assume a set $Var(c)$ of variables
. All of these sets are assumed to be disjoint. A \emph{valuation} is a function $\val$ mapping each $p\in Var(c)$ to a subset of $\Beh{c}$. 

For $C\subseteq\C$, the \emph{elementary $C$-logic} $\mathcal{L}(C)$ is built from 
variables in $\bigcup_{c\in C} V(c)$ and the boolean connectives $\land$ and $\lnot$. It is interpreted over triples $(f,\mathcal{V},x)$ where $f$ is a system with $\Comp{f}\supseteq C$ and $x\in\Beh{f}$ using the following clauses:
\begin{align*}
    f,\val,x\models p&:\iff f_c(x)\in \val(p) \qquad\qquad(p\in Var(c))\\
    f,\val,x\models \varphi\land\psi&:\iff f,\val,x\models \varphi\text{ and }f,\val,x\models\psi \\
    f,\val,x\models \lnot\varphi &:\iff f,\val,x \nvDash \varphi 
\end{align*}



We write $f,\val\models\varphi$ if $f,\val,x\models\varphi$ for all $x\in\Beh{f}$. As usual, we can define further boolean connectives $\varphi\lor\psi:=\lnot(\lnot\varphi\land\lnot\psi)$ and $\varphi\to\psi:=\lnot\varphi\lor\psi$. We let $\mathcal{L}:=\mathcal{L}(\C)$ and $\mathcal{L}(f):=\mathcal{L}(\Comp{f})$.

For example, for $p\in Var(c)$ and $q\in Var(d)$ we have $f,\val\models p\to q$ iff $f_d(x)\in\val(q)$ whenever $f_c(x)\in\val(p)$. That is, if some behaviour of $f$ locally satisfies $p$ at the component $c$, then it locally satisfies $q$ at $d$.

%
%
%
%
%

\begin{lemma}[absoluteness]
    \label{lem:transfer}
    Let $f\!\impl[\sigma]\!g$ and $\alpha\!\in\!\mathcal{L}(g)$. Then, for all $x\in\Beh{f}$,
    \[
    f,\val,x\models \alpha\iff g,\val,\sigma(x)\models\alpha
    \]
\end{lemma}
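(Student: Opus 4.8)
The plan is to prove the biconditional by structural induction on the formula $\alpha\in\mathcal{L}(g)$. All the mathematical content lives in the atomic case, where it collapses to the commutativity built into the definition of an implementation; the boolean connectives are then pure bookkeeping.

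First I would record the pointwise consequence of $f\impl[\sigma]g$. By Definition~\ref{def:implementation} this means $f_{\Comp{g}}=g\circ\sigma$, and projecting this equality of maps $\Beh{f}\to\Beh{\Comp{g}}$ onto any single component $c\in\Comp{g}$ gives $f_c(x)=g_c(\sigma(x))$ for every $x\in\Beh{f}$. Note that both triples $(f,\val,x)$ and $(g,\val,\sigma(x))$ are admissible inputs to the satisfaction relation, since $\Comp{f}\supseteq\Comp{g}$ guarantees that the components referenced by $\alpha$ are present in $f$ as well as in $g$.

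For the atomic case $\alpha=p$ with $p\in Var(c)$, the hypothesis $\alpha\in\mathcal{L}(g)=\mathcal{L}(\Comp{g})$ forces $c\in\Comp{g}$, so the displayed equation applies at $c$; unfolding the atomic clause on both sides yields
\[
f,\val,x\models p \iff f_c(x)\in\val(p) \iff g_c(\sigma(x))\in\val(p) \iff g,\val,\sigma(x)\models p,
\]
the middle step being exactly $f_c(x)=g_c(\sigma(x))$. The cases $\alpha=\varphi\land\psi$ and $\alpha=\lnot\varphi$ then follow immediately, since their satisfaction clauses refer only to satisfaction of the immediate subformulae at the \emph{same} triple, so the induction hypothesis transports each subformula and the clause re-assembles the equivalence; the derived connectives $\lor$ and $\to$ require no separate treatment. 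I do not expect a genuine obstacle: the lemma is in essence a naturality statement saying that $\sigma$ preserves all local observations, and the only point deserving a moment's care is confirming that every variable occurring in $\alpha$ is indexed by a component of $g$ --- which is exactly what membership in $\mathcal{L}(g)$ secures, and which makes the commutativity equation available at every atom.
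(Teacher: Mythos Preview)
Your proof is correct and follows essentially the same route as the paper: a structural induction on $\alpha$ whose only contentful step is the atomic case, reduced via $f_c(x)=g_c(\sigma(x))$ from the implementation diagram. The paper's own proof is in fact terser---it displays only the atomic chain of equivalences and omits the discussion of admissibility and the boolean steps you spell out.
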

\begin{proof}
    By induction on $\alpha$. We only show the case $\alpha=p_c$, which uses the properties of implementations:
    \[
    f,\val,x\models p_c\iff f_c(x)\in \val(p_c)\iff g_c(\sigma(x))\in \val(p_c)\iff g,\val,\sigma(x)\models p_c
    \]
    \end{proof}

As a corollary, we can observe that $g,\val\models\alpha$ implies $f,\val\models\alpha$ whenever $f\impl[\sigma] g$ and $\alpha\in\mathcal{L}(g)$. Thus properties of $g$ that are expressible in $\mathcal{L}(g)$ are inherited by any system implementing $g$.

We can state a first simple local reasoning principle: In order to prove that a system satisfies some $\mathcal{L}$-property $\beta$, it suffices to find subsystems $g$ and $h$ of $f$ and a property $\alpha$ such that $g$ enforces that $\alpha$ is true at the interface of $f$ and $g$, and $h$ satisfies $\beta$ whenever it satisfies $\alpha$ at its interface.

\begin{theorem}[Local reasoning I]
\label{thm:lrI}
    Let $g\coimpl[\sigma]f\impl[\pi]h$ and $I=\Int{f}{g}\neq\emptyset$. The rule
    \[
    \infer{f,\val\models \beta}
        {
        g,\val\models\alpha
        &
        h,\val\models\alpha\ra\beta
        }
    \]
    is admissible for all $\alpha\in\mathcal{L}(I)$ and $\beta\in\mathcal{L}(h)$.
\end{theorem}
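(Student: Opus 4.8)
The plan is to reduce the entire rule to the absoluteness lemma (Lemma~\ref{lem:transfer}), applied along the two implementation legs $f\impl[\sigma]g$ and $f\impl[\pi]h$. The whole content of the rule is that a formula $\alpha$ living at the interface can be transported back and forth between the two subsystems through their common environment $f$. Concretely, I would fix an arbitrary $x\in\Beh{f}$ and aim to establish $f,\val,x\models\beta$; since $x$ is arbitrary, this yields $f,\val\models\beta$ by definition of the $\models$-without-point notation.

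First I would push the left premise up to $f$. From $g,\val\models\alpha$ we get in particular $g,\val,\sigma(x)\models\alpha$, and since $\alpha\in\mathcal{L}(I)$ is by construction a formula of $g$, absoluteness along $f\impl[\sigma]g$ gives $f,\val,x\models\alpha$. Next I would transport this onto the other leg: $\alpha$ is equally a formula of $h$, so absoluteness along $f\impl[\pi]h$ yields $h,\val,\pi(x)\models\alpha$. The right premise $h,\val\models\alpha\ra\beta$ specializes to $h,\val,\pi(x)\models\alpha\ra\beta$; unfolding $\alpha\ra\beta$ as $\lnot\alpha\lor\beta$ and combining with $h,\val,\pi(x)\models\alpha$ via the boolean clauses gives $h,\val,\pi(x)\models\beta$. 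Finally, since $\beta\in\mathcal{L}(h)$, a third application of absoluteness along $f\impl[\pi]h$ pulls this back to $f,\val,x\models\beta$, completing the argument. No induction on $\beta$ is needed once these three transport steps are in place.

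The one point that genuinely requires care — and the real reason the interface hypothesis appears — is well-formedness: the same syntactic $\alpha$ must be interpretable in $g$, in $f$, and in $h$ simultaneously. This is exactly what $\alpha\in\mathcal{L}(I)$ secures, because the components of the interface $I$ are common to $g$ and $h$ (and hence to $f$); thus every atom $p_c$ occurring in $\alpha$ has $c\in I$, and the implementation equations give $f_c(x)=g_c(\sigma(x))=h_c(\pi(x))$, so the atomic satisfaction clause refers to the same underlying component behaviour under all three interpretations. This shared-interface bookkeeping is the substance of the theorem; beyond it the statement is essentially a corollary of absoluteness together with the propositional semantics of $\ra$, and I anticipate no deeper obstacle.
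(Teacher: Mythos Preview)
Your proposal is correct and follows essentially the paper's approach: three applications of absoluteness (Lemma~\ref{lem:transfer}) to carry $\alpha$ from $g$ over to $h$, invoke the boolean semantics of $\to$ there, and then pull $\beta$ back to $f$. The only cosmetic difference is that you route $\alpha$ through the environment $f$ (via $g\to f\to h$), whereas the paper routes it through the interface $I$ viewed as a system (via $g\to I\to h$, using $g_I(\sigma(x))=h_I(\pi(x))$); both are straightforward invocations of absoluteness and neither buys anything the other does not.
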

\begin{proof}
    Let $x\in\Beh{f}$. By assumption, $g,\val,\sigma(x)\models \alpha$. As $g$ implements the interface $I$ (seen as a system), Lemma~\ref{lem:transfer} implies $I,\val,g_I(\sigma(x))\models \alpha$. Furthermore,  $g_I(\sigma(x))=h_I(\pi(x))$, so we have $I,\val,h_I(\pi(x))\models\alpha$. Another application of Lemma~\ref{lem:transfer} yields $h,\val,\pi(x)\models\alpha$. By assumption we have $h,\val,\pi(x)\models \alpha\ra\beta$ and so $h,\val,\pi(x)\models\beta$ follows. Thus, by yet another application of Lemma~\ref{lem:transfer}, $f,\val,x\models\beta$ as desired.
\end{proof}

Note that this theorem applies in particular to $f=g\comp h$ (see Def.~\ref{def:freecomp}).

\begin{example}
    Consider two components $c$ and $d$ with $\Beh{c}=\Beh{d}=\mathbb{Q}$. Let $g$ be a system implementing $c$ and having $d\notin\Comp{g}$. We think of $g$ as a system that, among other things, records the temperature in Celsius in its component $c$. Moreover let $h$ be a converter from Celsius to Fahrenheit, modelled as follows: $\Comp{h}=\{c,d\}$, $h=id:\Beh{h}\to\Beh{c}\times\Beh{d}$ where $\Beh{h}=\{(x,y)\in\mathbb{Q}\mid y=x*1.8+32\}$. Let $p_c\in V(c)$ and $p_d\in V(d)$ be variables with $\val(p_c)=[0,10]$ and $\val(p_d)=[32,50]$. Assume we know that $g\models p_c$, that is, only temperatures between $0$ and $10$ degree Celsius are measured by $g$. Then we can formally derive that $g\comp h$ only records temperatures between $32^\circ$ and $50^\circ$ Fahrenheit:
    \[
    \infer{g\comp h,\val\models p_d}
        {
        g,\val\models p_c 
        &
        h,\val\models p_c \to p_d
        }
    \]
\end{example}

\subsection{Adding a universal modality}
\label{subsec:universal}

A standard extension $\mathcal{L}(C)\subseteq\mathcal{L}(C)^\Box$ adds a universal modal operator $\Box$ to the language, subject to the following clause for satisfaction:
\[
f,\val,x\models\Box\varphi :\iff f,\val,y\models\varphi \text{ for all }y\in\Beh{f}
\]
$\Box \varphi$ therefore denotes a property $\varphi$ that holds for \emph{all} behaviours $x \in \Beh{f}$. This corresponds to the standard clause for $\Box$ in modal logic in the case that the accessibility relation is universal; that is, in which each element is related to every other element. We therefore obtain a syntactic variant of the well-known modal logic $\mathbf{S5}$. 

In the setting or process algebras, such as those discussed above that provide bases for simulation modelling, it is more natural to work with action modalities/formulae such as $[a]\phi$,  as discussed above, in which the action $a$ refers to an evolution of process state. In our setting, the simple modality refers directly to behaviours. In both cases, however, van~Benthem-Bergstra-Hennessy-Milner equivalence theorems can be obtained --- see Theorem~\ref{thm:HM}. 

In the presence of the modal operator $\Box$,  Lemma~\ref{lem:transfer} fails, but some variants of it remain provable. Call a formula in $\mathcal{L}^\Box$ \emph{positive} (resp.\ \emph{negative}) if no $\Box$ appears in the scope of an odd (resp.\ even) number of $\lnot$'s. By a simple induction we can show the following:

\begin{lemma}[directed absoluteness]
    \label{lem:dtransfer}
    Assume $f\impl[\sigma] g$. Then for all $x\in\Beh{f}$,
    \begin{align*}
     &g,\val,\sigma(x)\models\alpha\text{ implies }f,\val,x\models \alpha \quad\text{for positive }\alpha\in\mathcal{L}(g)^\Box  \\
     \text{and}\quad &f,\val,x\models \alpha \text{ implies }g,\val,\sigma(x)\models\alpha\quad\text{for negative }\alpha\in\mathcal{L}(g)^\Box.
    \end{align*}
\end{lemma}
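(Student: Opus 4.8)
The plan is to prove the two implications simultaneously by a single induction on the structure of $\alpha$, since the classes of positive and negative formulae are defined by mutual reference through negation and cannot be treated independently. First I would record how polarity behaves under the connectives: $\lnot\alpha$ is positive iff $\alpha$ is negative and conversely; $\alpha\land\beta$ is positive (resp.\ negative) iff both conjuncts are; $\Box\alpha$ is positive iff $\alpha$ is positive; and, crucially, $\Box\alpha$ is \emph{never} negative, because its outermost $\Box$ always lies under zero---hence an even number of---negations. With these polarity rules in hand, the two statements of the lemma, read as a single conjoined claim, can be propagated through each syntactic constructor.

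For the base case $\alpha=p_c$ there are no occurrences of $\Box$, so $p_c$ is simultaneously positive and negative; here Lemma~\ref{lem:transfer} already supplies the full biconditional $f,\val,x\models p_c \iff g,\val,\sigma(x)\models p_c$, from which both required implications are immediate. The conjunction case is routine, applying the induction hypothesis to each conjunct in the appropriate direction. The negation case is handled by contraposition together with the polarity swap: if $\lnot\alpha$ is positive then $\alpha$ is negative, and the positive implication for $\lnot\alpha$---that $g,\val,\sigma(x)\models\lnot\alpha$ implies $f,\val,x\models\lnot\alpha$---is precisely the contrapositive of the negative implication for $\alpha$, which holds by induction; the case where $\lnot\alpha$ is negative is symmetric, reducing to the positive implication for the positive formula $\alpha$.

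The case I expect to be the crux is $\alpha=\Box\varphi$, which is the only place where restricting to positive formulae is genuinely needed. Since $\Box\varphi$ is positive, $\varphi$ is positive, and only the positive implication must be shown. Assuming $g,\val,\sigma(x)\models\Box\varphi$, i.e.\ $g,\val,y'\models\varphi$ for every $y'\in\Beh{g}$, I would take an arbitrary $y\in\Beh{f}$ and use the key observation that $\sigma(y)\in\Beh{g}$ \emph{irrespective of whether $\sigma$ is surjective}, so that $g,\val,\sigma(y)\models\varphi$; the induction hypothesis (positive direction) for $\varphi$ then gives $f,\val,y\models\varphi$, and since $y$ was arbitrary, $f,\val,x\models\Box\varphi$. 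The subtlety worth flagging is exactly why the reverse transfer fails, and hence why $\Box$ is barred from negative formulae: to carry $\Box\varphi$ from $f$ back to $g$ one would need every $y'\in\Beh{g}$ to arise as $\sigma(y)$ for some $y\in\Beh{f}$, which requires surjectivity of $\sigma$. Managing this asymmetry through the induction is the entire substance of the lemma; the remaining cases are bookkeeping.
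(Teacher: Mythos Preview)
Your proposal is correct and follows exactly the approach the paper indicates: the paper merely writes ``By a simple induction we can show the following'' without spelling out any details, and your simultaneous induction on the structure of $\alpha$---with the polarity bookkeeping, the use of Lemma~\ref{lem:transfer} for atoms, contraposition for $\lnot$, and the one-directional argument for $\Box$ exploiting that $\sigma(y)\in\Beh{g}$ for arbitrary $y\in\Beh{f}$---is precisely the intended simple induction. Your observation that $\Box\varphi$ is never negative (so the negative implication is vacuous in that case) is the key structural point that makes the asymmetric treatment go through.
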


By tracking the absoluteness properties used in the proof of Theorem~\ref{thm:lrI}, we can generalize 
slightly:

\begin{theorem}[Local reasoning II]
    Let $g\coimpl[\sigma]f\impl[\pi]h$ and $I=\Int{f}{g}\neq\emptyset$. The rule
    \[
    \infer{f,\val\models \beta}
        {
        g,\val\models\alpha
        &
        h,\val\models\alpha\ra\beta
        }
    \]
    is admissible for all $\alpha\in\mathcal{L}(I)$ and positive $\beta\in\mathcal{L}(h)^\Box$.
\end{theorem}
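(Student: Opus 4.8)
The plan is to replay the proof of Theorem~\ref{thm:lrI} almost verbatim, changing only the final transfer step---where the conclusion $\beta$ is moved from the subsystem $h$ up to $f$---and replacing the full absoluteness used there by its directed counterpart, Lemma~\ref{lem:dtransfer}. Concretely, I would fix an arbitrary $x\in\Beh{f}$ and aim to establish $f,\val,x\models\beta$, from which $f,\val\models\beta$ follows since $x$ is arbitrary.

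First I would dispatch the $\alpha$-part exactly as in Theorem~\ref{thm:lrI}. From $g,\val\models\alpha$ I get $g,\val,\sigma(x)\models\alpha$; since $\alpha\in\mathcal{L}(I)$ carries no $\Box$, full absoluteness (Lemma~\ref{lem:transfer}) applies to $g$ implementing $I$ and yields $I,\val,g_I(\sigma(x))\models\alpha$. Using the interface identity $g_I(\sigma(x))=f_I(x)=h_I(\pi(x))$ exactly as in the proof of Theorem~\ref{thm:lrI}, this reads $I,\val,h_I(\pi(x))\models\alpha$, and a second use of Lemma~\ref{lem:transfer}, now for $h$ implementing $I$, gives $h,\val,\pi(x)\models\alpha$. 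Instantiating $h,\val\models\alpha\ra\beta$ at $\pi(x)$ and applying modus ponens then delivers $h,\val,\pi(x)\models\beta$.

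The one place where the argument must depart from Theorem~\ref{thm:lrI} is the last step. There $\beta$ was carried from $h$ to $f$ by full absoluteness, but now $\beta$ may contain $\Box$, so Lemma~\ref{lem:transfer} is no longer available. Instead I would invoke the positive half of Lemma~\ref{lem:dtransfer} for the implementation $f\impl[\pi]h$: as $\beta$ is a \emph{positive} formula of $\mathcal{L}(h)^\Box$, from $h,\val,\pi(x)\models\beta$ I conclude $f,\val,x\models\beta$, completing the derivation.

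The hard part is purely conceptual: one must see that positivity of $\beta$ is precisely the hypothesis that licenses this upward transfer, and that no such restriction is needed on $\alpha$. This is exactly the bookkeeping recorded by Lemma~\ref{lem:dtransfer}: a $\Box$ in positive position lifts from a subsystem to an implementing system because $\pi$ is a total function $\Beh{f}\to\Beh{h}$, so a universal claim over $\Beh{h}$ specializes to the images $\pi(x')$ of all $f$-behaviours, whereas in negative position the inclusion would run the wrong way. Since $\alpha$ has no modality, both $\alpha$-transfers stay valid in their full biconditional form, so tracking which direction of absoluteness each step consumes is the entire content of the generalization.
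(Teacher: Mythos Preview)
Your proposal is correct and matches the paper's approach exactly: the paper does not give a separate proof of this theorem but simply remarks that it follows ``by tracking the absoluteness properties used in the proof of Theorem~\ref{thm:lrI},'' and you have carried out precisely that tracking, identifying that only the final transfer of $\beta$ from $h$ up to $f$ needs the positive half of Lemma~\ref{lem:dtransfer} in place of full absoluteness.
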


The next local reasoning principle makes use of the notion of an input (Definition~\ref{def:input}).

\begin{theorem}[Local reasoning III]
    Assume $I=\Int{f}{g}\neq\emptyset$ is an input to $g$. The rule
    \[
    \infer{f\comp g,\val\nvDash \beta}
        {
        f,\val\nvDash\alpha
        &
        g,\val\models\lnot\alpha\ra\lnot\beta
        }
    \]
    is admissible for all $\alpha\in\mathcal{L}(I)$ and negative $\beta\in\mathcal{L}(g)^\Box$.
\end{theorem}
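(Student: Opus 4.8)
The plan is to read Local Reasoning III as the existential, witness-building dual of Theorem~\ref{thm:lrI} and its $\Box$-refinement: whereas those conclude that \emph{every} behaviour of $f$ satisfies $\beta$, here I must exhibit a \emph{single} behaviour of $f\comp g$ that fails $\beta$. Accordingly, I would start from the counter-witness supplied by the first premise and transport it through the pullback $f\comp g$ to manufacture the required counter-witness on the composition. Throughout, the projections $\pi_1,\pi_2$ of Definition~\ref{def:freecomp} are implementations $f\comp g\impl[\pi_1]f$ and $f\comp g\impl[\pi_2]g$, and I will freely use that $f$ and $g$ implement their common interface via $f_I$ and $g_I$.

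First I would unfold the premises. The hypothesis $f,\val\nvDash\alpha$ yields some $x_0\in\Beh{f}$ with $f,\val,x_0\nvDash\alpha$. Since $\alpha\in\mathcal{L}(I)$ contains no $\Box$, the full absoluteness Lemma~\ref{lem:transfer} applies (in both directions) along $f\impl[f_I]I$, giving $I,\val,f_I(x_0)\nvDash\alpha$. Next I invoke the input assumption: because $I$ is an input of $g$, the map $g_I$ is surjective, so there is $y_0\in\Beh{g}$ with $g_I(y_0)=f_I(x_0)$. By construction $(x_0,y_0)$ is a compatible pair, hence $z_0:=(x_0,y_0)\in\Beh{f\comp g}$ with $\pi_2(z_0)=y_0$. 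A second application of Lemma~\ref{lem:transfer}, now along $g\impl[g_I]I$, transfers $I,\val,f_I(x_0)\nvDash\alpha$ to $g,\val,y_0\nvDash\alpha$, that is $g,\val,y_0\models\lnot\alpha$. The second premise $g,\val\models\lnot\alpha\ra\lnot\beta$ then forces, at the point $y_0$, that $g,\val,y_0\models\lnot\beta$, i.e.\ $g,\val,y_0\nvDash\beta$.

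Finally I would lift this failure from $g$ to the composition. Since $\beta$ is negative in $\mathcal{L}(g)^\Box$ and $f\comp g\impl[\pi_2]g$, directed absoluteness (Lemma~\ref{lem:dtransfer}) gives that $f\comp g,\val,z_0\models\beta$ would imply $g,\val,\pi_2(z_0)=g,\val,y_0\models\beta$. Contrapositively, from $g,\val,y_0\nvDash\beta$ I conclude $f\comp g,\val,z_0\nvDash\beta$, whence $f\comp g,\val\nvDash\beta$, as desired.

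The main obstacle is not computational but conceptual, and it is exactly where the two side-conditions enter. The input assumption is indispensable: the first premise only produces a witness $x_0$ on the $f$-side, and without surjectivity of $g_I$ there need be no $g$-behaviour realizing the interface value $f_I(x_0)$, hence possibly no compatible pair in the pullback to serve as a counter-witness at all. The second delicate point is polarity bookkeeping: the non-satisfaction of $\beta$ must be transported \emph{upward} from $g$ to $f\comp g$ along $\pi_2$, which is sound only for negative $\beta$; for an arbitrary $\mathcal{L}(g)^\Box$-formula this step fails, just as the presence of $\Box$ already defeats the unrestricted Lemma~\ref{lem:transfer}. Getting these two orientations to line up --- the interface formula $\alpha$ travelling $\Box$-free in both directions, and the negative modal formula $\beta$ travelling only upward --- is the crux of the argument.
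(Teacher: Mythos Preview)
Your proof is correct and follows essentially the same route as the paper's: pick a witness $x_0$ to the failure of $\alpha$ in $f$, transport $\lnot\alpha$ along $f_I$ and back along $g_I$ using full absoluteness for $\alpha\in\mathcal{L}(I)$ (invoking surjectivity of $g_I$ from the input hypothesis to produce $y_0$), apply the second premise at $y_0$, and then lift the failure of $\beta$ from $g$ to $f\comp g$ using directed absoluteness for negative $\beta$. The only cosmetic difference is that the paper phrases the final step as ``$\lnot\beta$ is positive, hence transfers upward'', whereas you phrase it as ``$\beta$ is negative, hence transfers downward, now contrapose''; these are the same invocation of Lemma~\ref{lem:dtransfer}.
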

\begin{proof}
    Pick some $x\in\Beh{f}$ such that $f,\val,x\models\lnot\alpha$. By absoluteness, we have $I,\val,f_I(x)\models\lnot\alpha$. Since $I$ is an input to $g$, we can find $y\in\Beh{g}$ such that $g_I(y)=f_I(x)$. Again by absoluteness we obtain $g,\val,y\models\lnot\alpha$ and hence $g,\val,y\models\lnot\beta$. As $\lnot\beta$ is positive it follows that $f\comp g,\val,( x,y)\models\lnot\beta$.
\end{proof}

Thus to show that some negative $\beta$ fails in $f\comp g$ where the interface is an input to $g$, it suffices to show that $f$ fails some property $\alpha$ at the interface and that this failure causes the failure of $\beta$ in $g$. This needs not be true if $f$ is not an input, as then the behaviour failing $\alpha$ might not be realized in $f\comp g$.


For systems with a trivial interface we can show the following:

\begin{theorem}
    The rule
    \[
    \infer{g\comp h,\val\models \beta}
        {
        h,\val\models\beta
        }
    \]
    is admissible for all $\beta\in\mathcal{L}(h)^\Box$ given that $\Int{g}{h}=\emptyset$.
\end{theorem}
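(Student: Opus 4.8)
The plan is to reduce the claim to a \emph{full} absoluteness statement for the second projection of $g\comp h$, exploiting that an empty interface forces this projection to be surjective. First I would dispose of the degenerate case: if $g$ is not runnable, i.e.\ $\Beh{g}=\emptyset$, then since $\Beh{g\comp h}\subseteq\Beh{g}\times\Beh{h}$ (Def.~\ref{def:freecomp}) we have $\Beh{g\comp h}=\emptyset$, so $g\comp h,\val\models\beta$ holds vacuously and the rule is admissible. Hence I may assume $\Beh{g}\neq\emptyset$. Next I would record the key structural fact: since $\Int{g}{h}=\emptyset$, the compatibility constraint in Def.~\ref{def:freecomp} is vacuous, so $\Beh{g\comp h}=\Beh{g}\times\Beh{h}$, and for every $c\in\Comp{h}$ we have $(g\comp h)_c(x,y)=h_c(y)$ (as $c\in\Comp{h}\setminus\Comp{g}$). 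Consequently the second projection $\pi:\Beh{g\comp h}\ra\Beh{h}$, $(x,y)\mapsto y$, is an implementation $g\comp h\impl[\pi]h$, and because $\Beh{g}\neq\emptyset$ it is moreover \emph{surjective}: every $y\in\Beh{h}$ is hit by $(x,y)$ for any fixed $x\in\Beh{g}$.

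The heart of the argument is then a strengthening of Lemma~\ref{lem:transfer} to $\Box$-formulas under surjectivity. I would prove, by induction on $\alpha\in\mathcal{L}(h)^\Box$, the biconditional
\[
g\comp h,\val,(x,y)\models\alpha\iff h,\val,y\models\alpha
\qquad\text{for all }(x,y)\in\Beh{g\comp h}.
\]
The atomic case uses $(g\comp h)_c(x,y)=h_c(y)$ exactly as in Lemma~\ref{lem:transfer}, and the boolean cases ($\land$, $\lnot$) are immediate from the induction hypothesis. The main obstacle, and the only place the hypotheses genuinely bite, is the case $\alpha=\Box\varphi$. Here $g\comp h,\val,(x,y)\models\Box\varphi$ unfolds (using the induction hypothesis on $\varphi$) to ``$h,\val,y'\models\varphi$ for all $(x',y')\in\Beh{g\comp h}$'', and I must match this against $h,\val,y\models\Box\varphi$, i.e.\ ``$h,\val,y''\models\varphi$ for all $y''\in\Beh{h}$''. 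The inclusion $\{y'\mid(x',y')\in\Beh{g\comp h}\}\subseteq\Beh{h}$ yields one direction for free; surjectivity of $\pi$ gives $\{y'\mid(x',y')\in\Beh{g\comp h}\}=\Beh{h}$ and hence the converse. This is precisely why runnability of $g$ is needed: without it the $\Box$-modality over the larger behaviour set $\Beh{g\comp h}$ could quantify over strictly fewer $h$-behaviours than $\Box$ over $\Beh{h}$.

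Finally I would conclude. Assuming $h,\val\models\beta$, i.e.\ $h,\val,y\models\beta$ for all $y\in\Beh{h}$, the biconditional gives $g\comp h,\val,(x,y)\models\beta$ for every $(x,y)\in\Beh{g\comp h}$, which is exactly $g\comp h,\val\models\beta$. Together with the degenerate case, this establishes admissibility of the rule. I expect the inductive absoluteness lemma to be the only nontrivial component; everything else is bookkeeping about the empty-interface composition.
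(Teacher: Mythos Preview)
Your proposal is correct and follows essentially the same approach as the paper: both prove the biconditional $g\comp h,\val,(x,y)\models\delta\iff h,\val,y\models\delta$ by induction on $\delta\in\mathcal{L}(h)^\Box$, using that $\Beh{g\comp h}=\Beh{g}\times\Beh{h}$ when the interface is empty, with the $\Box$-case being the only nontrivial step. Your explicit treatment of the degenerate case $\Beh{g}=\emptyset$ and the framing via surjectivity of the projection are extra care; the paper handles this implicitly since the induction is stated for pairs $(x,y)\in\Beh{g\comp h}$, whose existence already forces $\Beh{g}\neq\emptyset$ in the $\Box$-step.
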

\begin{proof}
    It suffices to prove by induction over $\delta\in\mathcal{L}(h)^\Box$ that 
    \[
    g\comp h,\val,(x,y)\models\delta\iff h,\val,y\models\delta
    \]
    This follows from the fact that $\Beh{g\comp h}=\Beh{g}\times\Beh{h}$ whenever $\Int{g}{h}=\emptyset$. We illustrate one step of the induction:
    \begin{align*}
        g\comp h,\val,(x,y)\models\Box\delta'&\iff \forall x'\in\Beh{g}\forall y'\in\Beh{h}.\,\, g\comp h,\val,(x',y')\models\delta' \\
        & \overset{IH}{\iff} \forall x'\in\Beh{g}\forall y'\in\Beh{h}.\,\,h,\val,y'\models\delta' \\
        & \iff \forall y'\in\Beh{h}.\,\, h,\val,y'\models\delta' \\
        & \iff h,\val,y\models\Box\delta'
    \end{align*}
    
\end{proof}

\noindent \textit{Remark}: If $\beta$ encodes a Hoare triple, then the above is a frame rule (cf. \cite{IO01}).

Finally, we state a Hennessy-Milner Theorem (see, e.g., Theorem~2.24 in \cite{bdRV01}) for our logic. The usual reference to bisimulations can be avoided here as we are working with the simple modal logic $\mathbf{S5}$; instead, we use the notion of a \emph{type}.

\begin{definition}
For a system $f$ with $\Comp{f}=C$ and $x\in\Beh{f}$ we define $\mathsf{tp}(x):=\{p\in V(C)\mid f,x\models p\}$. Moreover, we define $\mathsf{tps}(f):=\{\mathsf{tp}(x)\mid x\in\Beh{f}\}$.
\end{definition}

\begin{theorem}\label{thm:HM}
    Let $f,g$ be two systems with $\Comp{f}=\Comp{g}=:C$, and assume $V(C)$ is finite.\footnote{For simplicity we here use this condition stronger than \emph{image-finiteness} that appears in Hennessy-Milner theorems in modal logic, see, for example,  
    \cite{Stirling01}.} For any $(x,y)\in\Beh{f}\times\Beh{g}$, the following are equivalent:
    \begin{enumerate}
         \item $f,x\models\varphi\iff g,y\models\varphi$, for all $\varphi\in\mathcal{L}(C)$,
        and 
        \item $\mathsf{tp}(x)=\mathsf{tp}(y)$.
    \end{enumerate}
    The following are also equivalent:
    \begin{enumerate}
         \item $f,x\models\varphi\iff g,y\models\varphi$, for all $\varphi\in\mathcal{L}(C)^\Box$,
        and 
        \item $\mathsf{tp}(x)=\mathsf{tp}(y)$ and $\mathsf{tps}(f)=\mathsf{tps}(g)$.
    \end{enumerate}
    
\end{theorem}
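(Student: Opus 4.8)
The plan is to handle each equivalence by proving the easy direction $(1)\Rightarrow(2)$ via a small class of test formulae and the substantive direction $(2)\Rightarrow(1)$ by induction on $\varphi$, with a fixed valuation $\val$ throughout. For the first equivalence, which involves no modality, finiteness of $V(C)$ plays no role. The direction $(1)\Rightarrow(2)$ is immediate: specializing the hypothesis to atoms $p\in V(C)$ gives $p\in\mathsf{tp}(x)\iff p\in\mathsf{tp}(y)$, whence $\mathsf{tp}(x)=\mathsf{tp}(y)$. For $(2)\Rightarrow(1)$ I would induct on $\varphi\in\mathcal{L}(C)$: the atomic case is exactly the hypothesis $\mathsf{tp}(x)=\mathsf{tp}(y)$, and the cases for $\land$ and $\lnot$ follow routinely from the induction hypothesis.

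For the second equivalence I would first note that, since $\lnot$ and $\land$ already lie in $\mathcal{L}(C)^\Box$, the argument above shows that condition $(1)$ still forces $\mathsf{tp}(x)=\mathsf{tp}(y)$. The extra content is the clause $\mathsf{tps}(f)=\mathsf{tps}(g)$, and this is where finiteness of $V(C)$ enters. Because $V(C)$ is finite there are only finitely many types, and for each $t\subseteq V(C)$ I can form the characteristic formula $\chi_t:=\bigwedge_{p\in t}p\land\bigwedge_{p\notin t}\lnot p$, which satisfies $f,z\models\chi_t\iff\mathsf{tp}(z)=t$. Using $\lnot\Box\lnot\chi_t$ for the existential dual, the universal clause for $\Box$ gives $f,x\models\lnot\Box\lnot\chi_t\iff t\in\mathsf{tps}(f)$, independently of $x$, and likewise for $g$. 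Since $(1)$ makes $x$ and $y$ agree on $\lnot\Box\lnot\chi_t$ for every $t$, we conclude $\mathsf{tps}(f)=\mathsf{tps}(g)$, establishing $(1)\Rightarrow(2)$.

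For $(2)\Rightarrow(1)$ in the modal case I would prove, by induction on $\varphi\in\mathcal{L}(C)^\Box$, the strengthened statement that $f,x'\models\varphi\iff g,y'\models\varphi$ holds for \emph{every} pair $(x',y')\in\Beh{f}\times\Beh{g}$ with $\mathsf{tp}(x')=\mathsf{tp}(y')$, keeping the fixed hypothesis $\mathsf{tps}(f)=\mathsf{tps}(g)$. The atomic and boolean cases are as before. The crucial case is $\varphi=\Box\psi$: here $f,x'\models\Box\psi$ means $f,z\models\psi$ for all $z\in\Beh{f}$, and this must be matched against $g,w\models\psi$ for all $w\in\Beh{g}$. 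Given such a $w$, its type lies in $\mathsf{tps}(g)=\mathsf{tps}(f)$, so there is some $z\in\Beh{f}$ with $\mathsf{tp}(z)=\mathsf{tp}(w)$; the induction hypothesis applied to the equal-type pair $(z,w)$ transfers $f,z\models\psi$ to $g,w\models\psi$, and symmetrically in the other direction.

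The main obstacle is exactly this $\Box$-case: the naive induction hypothesis phrased for the single fixed pair $(x,y)$ is too weak, because $\Box$ quantifies over \emph{all} behaviours rather than over behaviours related to $x$. The remedy is to strengthen the inductive claim to range over all equal-type pairs, and the equation $\mathsf{tps}(f)=\mathsf{tps}(g)$ is precisely what guarantees that each behaviour on one side has an equal-type partner on the other, so the strengthened hypothesis can be invoked. Finiteness of $V(C)$ is used only once, to manufacture the formulae $\chi_t$ that recover $\mathsf{tps}(f)=\mathsf{tps}(g)$ from logical equivalence; without it this step would require an infinitary conjunction.
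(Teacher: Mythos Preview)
Your proposal is correct and follows essentially the same route as the paper: characteristic formulae $\chi_t$ (the paper writes $\varphi_D$) together with $\lnot\Box\lnot\chi_t$ to recover $\mathsf{tps}(f)=\mathsf{tps}(g)$ for $(1)\Rightarrow(2)$, and an induction on $\varphi$ for $(2)\Rightarrow(1)$ where the $\Box$-step matches behaviours across $f$ and $g$ via equal types. Your explicit strengthening of the induction hypothesis to all equal-type pairs is exactly what the paper uses implicitly when it applies the IH to the freshly chosen pair $(x',y')$.
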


\begin{proof}
We show only the second equivalence. For (1)$\Rightarrow$(2), we argue as follows: 
    Let $D:=\mathsf{tp}(x)$, and define $\varphi_D:= \bigwedge\{p\mid p\in D\}\land\bigwedge\{\lnot p\mid p\in V(C)\setminus D\}.$ Then $f,x\models \varphi_D$ and so $g,y\models \varphi_D$ by assumption, which in turn implies $\mathsf{tp}(y)=\mathsf{tp}(x)=D$. Now let $D\in\mathsf{tps}(f)$, say $D=\mathsf{tp}(x')$ for some $x'\in\Beh{f}$. Then $f,x\models\lnot\Box\lnot\varphi_D$, and so by assumption $g,y\models\lnot\Box\lnot\varphi_D$. It follows that for some $y'\in\Beh{g}$, $g,y'\models\varphi_D$ and so $\mathsf{tp}(y')=D$, which implies $D\in\mathsf{tps}(g)$. Hence $\mathsf{tps}(f)\subseteq\mathsf{tps}(g)$, and by a symmetric argument we get $\mathsf{tps}(g)\subseteq\mathsf{tps}(f)$.

    (2)$\Rightarrow$(1) is achieved by induction on $\varphi$; we only consider the case that $\varphi=\Box\psi$. Assume towards a contradiction that $f,x\models\Box\psi$ and $g,y\nvDash\Box\psi$. The latter means that $g,y'\nvDash\psi$ for some $y'\in\Beh{g}$. By assumption, there is an $x'\in\Beh{f}$ such that $\mathsf{tp}(x')=\mathsf{tp}(y')$. Then by the induction hypothesis, $f,x'\nvDash \psi$, contradicting $f,x\models\Box\psi$. Conversely, if $f,x\nvDash\Box\psi$, then $f,x'\nvDash\psi$ for some $x'\in\Beh{f}$. Pick $y'\in\Beh{g}$ with $\mathsf{tp}(y')=\mathsf{tp}(x')$; by the induction hypothesis,  $g,y'\nvDash\psi$ and so $g,y\nvDash\Box\psi$ follows.
\end{proof}

\subsection{Adding structural connectives} \label{subsec:struct-conn}

We have discussed extensively the sense in which systems consist of components, connected by interfaces. We have also established an elementary logic of system behaviours and, while we have considered some 
basic notions of local reasoning, we have not yet given a logical characterization of the compositional structure of systems. 

To this send, we now consider two `structural' connectives that are closely related to the multiplicative conjunction found in relevance and bunched logics. In particular, their use to characterize system structure 
resembles the use of such connectives in the family of `separation logics' and related ideas \cite{IO01,Reynolds2002,Docherty2019,pym2019resource}. 


We define the following ternary relation between system--behaviour pairs, which syntactically overloads the $\comp$-operator from Definition~\ref{def:freecomp}:

\[
\begin{array}{rcl}
    \mbox{`$(f,x)=(f_1,x_1)\comp(f_2,x_2)$'} & :\iff & 
    \mbox{$f\equiv f_1\comp f_2$ and $\forall i\in\{1,2\}$} \\
    & & \mbox{$\forall c\in\Comp{f_i}: f_c(x)=(f_i)_c(x_i)$} \\
    \mbox{`$(f,x)=(f_1,x_1)\dcomp(f_2,x_2)$'} & :\iff & \mbox{$(f,x)=(f_1,x_1)\comp(f_2,x_2)$ and} \\
    & & \mbox{$\Int{f_1}{f_2}$ is an input to $f_2$}
\end{array}
\]
The first clause uses the equivalence notion $\equiv$ from Definition~\ref{def:equivalence}.

We can now define a logic $\mathcal{L}^*(C)$ that extends $\mathcal{L}(C)$ with binary operators $\ast$ and $\dast$, which are interpreted as follows:
\[
\begin{array}{rcl}
f,\val,x\models \varphi\ast\psi & \iff & 
    \mbox{$(f,x) = (f_1,x_1) \comp (f_2,x_2)$ for some} \\
    & & \mbox{\rm $f_1,\val,x_1\models\varphi$ and $f_2,\val,x_2\models\psi$} \\ 
f,\val,x\models \varphi\dast\psi & \iff & 
    \mbox{\rm $(f,x) = (f_1,x_1) \dcomp (f_2,x_2)$ for some} \\ 
    & & \mbox{\rm $f_1,\val,x_1\models\varphi$ and $f_2,\val,x_2\models\psi$} 
\end{array}
\]
Evidently, $\ast$ and $\dast$ are multiplicative conjunctions. The 
first characterizes a simple (ordered) decomposition of a system into two components. The second also does this, but, in addition, identifies the interface that supports the output of the first component to be the input to the second.

\begin{theorem}[local-global]
\label{thm:globlocal}
    The following is a theorem for all positive $\varphi$ in $\mathcal{L}^\Box$:
    \[
    \varphi\ast\top\ra\varphi
    \]
\end{theorem}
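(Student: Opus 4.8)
The plan is to prove validity of the formula directly. Fix a system $f$, a valuation $\val$, and a behaviour $x\in\Beh{f}$ with $f,\val,x\models\varphi\ast\top$, and aim to derive $f,\val,x\models\varphi$. Unfolding the semantics of $\ast$, there is a decomposition $(f,x)=(f_1,x_1)\comp(f_2,x_2)$ with $f_1,\val,x_1\models\varphi$ (the second conjunct $f_2,\val,x_2\models\top$ being automatic). By the definition of the ternary relation, this means that $f\equiv f_1\comp f_2$ together with $f_c(x)=(f_1)_c(x_1)$ for every $c\in\Comp{f_1}$. Since $\varphi$ is interpreted at $f_1$, all its components lie in $\Comp{f_1}$, so $\varphi\in\mathcal{L}(f_1)^\Box$, and since $\Comp{f}=\Comp{f_1\comp f_2}\supseteq\Comp{f_1}$ the target statement $f,\val,x\models\varphi$ is also well-formed.

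The key structural move is to manufacture an implementation of $f_1$ in $f$. First I would extract from $f\equiv f_1\comp f_2$ an implementation $f\impl[\tau]f_1\comp f_2$, and compose it with the canonical implementation $f_1\comp f_2\impl[\pi_1]f_1$ given by the first projection $(u,v)\mapsto u$ (this is an implementation because $(f_1\comp f_2)_c(u,v)=(f_1)_c(u)$ for $c\in\Comp{f_1}$). This yields $f\impl[\sigma]f_1$ with $\sigma:=\pi_1\circ\tau$. The subtle point — and the reason the statement is not entirely trivial — is that $\sigma(x)$ need not equal $x_1$, because the framework explicitly allows non-injective systems. What does hold is that $\sigma(x)$ and $x_1$ are indistinguishable inside $f_1$: the implementation property of $\sigma$ gives $(f_1)_c(\sigma(x))=f_c(x)$ for $c\in\Comp{f_1}$, and combining this with $f_c(x)=(f_1)_c(x_1)$ from the ternary relation yields $f_1(\sigma(x))=f_1(x_1)$.

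From here the argument is routine. Since atomic satisfaction $f_1,\val,z\models p$ depends only on the local behaviour $(f_1)_c(z)$, and since $\Box$ is interpreted by quantifying over all of $\Beh{f_1}$ independently of the evaluation point, a straightforward induction on formulae shows $f_1,\val,\sigma(x)\models\alpha\iff f_1,\val,x_1\models\alpha$ for every $\alpha\in\mathcal{L}(f_1)^\Box$. Applying this to $\varphi$ gives $f_1,\val,\sigma(x)\models\varphi$. Finally I would invoke directed absoluteness (Lemma~\ref{lem:dtransfer}) for the implementation $f\impl[\sigma]f_1$: since $\varphi$ is positive, $f_1,\val,\sigma(x)\models\varphi$ implies $f,\val,x\models\varphi$, as required.

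I expect the main obstacle to be precisely the mismatch between $\sigma(x)$ and $x_1$: one must resist assuming $\sigma(x)=x_1$ and instead observe that only the local behaviours in $f_1$ (hence the atomic truths) and the global behaviour set $\Beh{f_1}$ (hence the $\Box$-truths) are relevant. The positivity hypothesis is used in, and only in, the final appeal to directed absoluteness: it is exactly the condition under which a $\Box$-formula true over all of $\Beh{f_1}$ survives pullback along $\sigma$ to $\Beh{f}$, where $\sigma(\Beh{f})$ may be a proper subset of $\Beh{f_1}$.
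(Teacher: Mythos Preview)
Your proof is correct and follows the paper's approach: unfold the $\ast$-clause, obtain an implementation $f\impl[\sigma]f_1$, and invoke directed absoluteness (Lemma~\ref{lem:dtransfer}) for positive formulae. The paper's version is terser and sidesteps your invariance argument by directly asserting an implementation with $\sigma(x)=x_1$---which is legitimate because the ternary-relation condition yields $f_{\Comp{f_1}}(x)=f_1(x_1)$, so any implementation can be modified at the single point $x$ to send it to $x_1$---but your more explicit route via ``$f_1(\sigma(x))=f_1(x_1)$ implies identical $\mathcal{L}^\Box$-truths at $\sigma(x)$ and $x_1$'' is equally valid.
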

That is, if a positive formula $\varphi$ holds in a part of the system, then it holds in the whole system.
\begin{proof}
    Assume $f,\val,x\models\varphi\ast\top$, and let $(f_1,x_1),(f_2,x_2)$ be system-behaviour pairs as in the satisfaction clause for $\ast$. Thus $f_1,\val,x_1\models\varphi$. As $f$ implements $f_1$ via an implementation $\sigma$ that satisfies $\sigma(x)=x_1$ and $\varphi$ is positive, it follows from Lemma~\ref{lem:dtransfer} that $f,\val,x\models\varphi$.
\end{proof}
\begin{theorem}[global-local]
    The following is a theorem for all negative $\varphi$ in $\mathcal{L}^\Box$:
    \[
    \varphi\ra(\psi\ast\psi'\ra (\psi\land\varphi) \ast \psi')
    \]
\end{theorem}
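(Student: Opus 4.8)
The plan is to unwind the two nested implications and then mirror the argument used for Theorem~\ref{thm:globlocal}, but with the opposite polarity. Fix an arbitrary triple $(f,\val,x)$. By the semantics of $\ra$ (material implication), it suffices to assume $f,\val,x\models\varphi$ and to show $f,\val,x\models \psi\ast\psi'\ra(\psi\land\varphi)\ast\psi'$; unwinding the inner implication, I would additionally assume $f,\val,x\models\psi\ast\psi'$ and aim to derive $f,\val,x\models(\psi\land\varphi)\ast\psi'$.

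From $f,\val,x\models\psi\ast\psi'$, the satisfaction clause for $\ast$ yields a decomposition $(f,x)=(f_1,x_1)\comp(f_2,x_2)$ with $f_1,\val,x_1\models\psi$ and $f_2,\val,x_2\models\psi'$. The key move is to reuse this very decomposition as the witness for the conclusion: since $f_2,\val,x_2\models\psi'$ already holds, it remains only to upgrade $f_1,\val,x_1\models\psi$ to $f_1,\val,x_1\models\psi\land\varphi$, i.e.\ to establish $f_1,\val,x_1\models\varphi$.

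To obtain $f_1,\val,x_1\models\varphi$ I would transfer the hypothesis $f,\val,x\models\varphi$ from the whole system $f$ down to its part $f_1$. As in the proof of Theorem~\ref{thm:globlocal}, the relation $(f,x)=(f_1,x_1)\comp(f_2,x_2)$ furnishes an implementation $f\impl[\sigma]f_1$ with $\sigma(x)=x_1$: composing the witness of $f\equiv f_1\comp f_2$ with the projection $f_1\comp f_2\impl f_1$ gives an implementation $f\impl f_1$, which may be redefined at the single point $x$ to send it to $x_1$ (legitimate because the decomposition forces $f_{\Comp{f_1}}(x)=f_1(x_1)$, so the commuting condition still holds at $x$). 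Because $\varphi$ is negative, the negative half of Lemma~\ref{lem:dtransfer}, applied to $f\impl[\sigma]f_1$, gives that $f,\val,x\models\varphi$ implies $f_1,\val,\sigma(x)\models\varphi$, that is, $f_1,\val,x_1\models\varphi$. Combining this with $f_1,\val,x_1\models\psi$ yields $f_1,\val,x_1\models\psi\land\varphi$, and feeding the same decomposition back into the satisfaction clause for $\ast$ gives $f,\val,x\models(\psi\land\varphi)\ast\psi'$, as required.

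The step I expect to be the crux is getting the direction of transfer right. Whereas Theorem~\ref{thm:globlocal} pushes a \emph{positive} formula from a part up to the whole (via the positive half of Lemma~\ref{lem:dtransfer}), here I must pull a \emph{negative} formula from the whole down to the part, which is exactly the complementary, negative half of the same lemma; this is precisely why the hypothesis demands $\varphi$ negative. The only other point to check is that $\varphi$ is a formula over $\Comp{f_1}$, so that $f_1,\val,x_1\models\varphi$ typechecks and Lemma~\ref{lem:dtransfer} applies; this is implicit in asserting $\psi\land\varphi$ of the left component in the conclusion $(\psi\land\varphi)\ast\psi'$.
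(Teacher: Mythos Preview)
Your proof is correct. Both you and the paper use the same decomposition witnessing $\psi\ast\psi'$ and the same underlying tool (directed absoluteness, Lemma~\ref{lem:dtransfer}), but the routes diverge at the transfer step: you apply the negative half of Lemma~\ref{lem:dtransfer} directly to push $\varphi$ from $(f,x)$ down to $(f_1,x_1)$, whereas the paper argues by contrapositive, observing that if $\lnot\varphi$ held at $(f_1,x_1)$ then, $\lnot\varphi$ being positive, Theorem~\ref{thm:globlocal} would give $(\lnot\varphi)\ast\top\to\lnot\varphi$ and hence $\lnot\varphi$ at $(f,x)$, contradicting the hypothesis. Your route is cleaner and avoids the detour through excluded middle at $f_1$; the paper's route has the minor expository advantage of reusing the already-proven Theorem~\ref{thm:globlocal} as a black box rather than re-invoking the underlying lemma. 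Your closing remark about needing $\varphi\in\mathcal{L}(\Comp{f_1})^\Box$ is well taken and applies equally to the paper's argument, which silently evaluates $\lnot\varphi$ at $f_1$.
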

That is, if a negative formula $\varphi$ holds globally and some formula $\psi$ holds locally, then $\varphi$ holds at that location as well.
\begin{proof}
    Assume both $\varphi$ and $\psi\ast\psi'$ hold in a system. Clearly we must have at least $(\psi\land\varphi)\ast\psi'$ or $(\psi\land\lnot\varphi)\ast\psi'$ in the same system, so it suffices to show that $(\psi\land\lnot\varphi)\ast\psi'$ cannot be the case. Indeed, as $\lnot\varphi$ is positive we have $(\lnot\varphi)\ast\top\to\lnot\varphi$, or equivalently $\varphi\to\lnot(\lnot\varphi\ast\top)$, from Theorem~\ref{thm:globlocal}. As we assume $\varphi$ this implies $\lnot(\lnot\varphi\ast\top)$, which is stronger than the required $\lnot((\psi\land\lnot\varphi)\ast\psi')$.
\end{proof}

Just as in linear, bunched, and relevance logics,  the multiplicative conjunction $\ast$ gives rise to another connective $\wand$ called the `magic wand':
\begin{align*}
 f,\val,x\models \varphi\wand\psi \iff &f\otimes g,\val,(x,y)\models \psi  \\
 &\text{whenever }g,\val,y\models\varphi
\end{align*}
Using $\wand$, a system $f$ can reason about its place inside a larger structure $f\otimes g$. More specifically, $f,\val,x$ satisfies $\varphi\wand\psi$ if, whenever `plugged into' a system satisfying $\varphi$, the combined system satisfies $\psi$. Using $\dcomp$ instead of $\comp$ in the clause above defines a `directed magic wand', which then specifies properties of combined systems where $f$ acts as an input.

We mention one further variant of a multiplicative conjunction. Defining
\begin{align*}
     \mbox{`$(f,x)=(f_1,x_1)\discomp(f_2,x_2)$'} & :\iff & \mbox{$(f,x)=(f_1,x_1)\comp(f_2,x_2)$ and} \\
    & & \mbox{$\Int{f_1}{f_2}=\emptyset$}
\end{align*}
we introduce $\ast_d$ via the clause 
\begin{align*}
f,\val,x\models \varphi\disast\psi & \iff & 
    \mbox{\rm $(f,x) = (f_1,x_1) \discomp (f_2,x_2)$ for some} \\ 
    & & \mbox{\rm $f_1,x_1,\val\models\varphi$ and $f_2,x_2,\val\models\psi$} 
\end{align*}
This --- very restrictive --- conjunction corresponds exactly to the multiplicative conjunction of Separation Logic \cite{IO01,Reynolds2002}, where it is stipulated that the heap is split into disjoint parts. In our context, this means that the two subsystems $f_1$ and $f_2$ have no components in common.

We leave a more systematic study of these structural connectives to further work. In particular, we would like to explore extensions of the Hennessy-Milner theorem (Theorem~\ref{thm:HM}) for $\ast$ and $\dast$.

\section{Discussion and Conclusion} \label{sec:discussion}

We have established a minimalistic, behaviour-based framework for systems modelling. We have shown that it encompasses key ideas from systems theory and have given a logical analysis that supports reasoning about compositional structure. In particular, it can be shown that the framework can be used to give a semantics to the modelling tools based on the distributed systems metaphor, as discussed in Section~\ref{sec:introduction}, though the details of this are beyond our present scope. 
This includes the approach to interfaces, composition, and local reasoning as described, for example, in \cite{CP15,caulfield2016mod,CIP2022}.

We have also shown that even a rather concrete property of distributed system --- namely the CAP theorem --- is witnessed in the framework.

In Section~\ref{sec:logic}, we have laid out a minimal logical theory that, in addition to having the standard classical connectives, is sufficient to characterize basic system evolution (through a modality) 
and compositional structure (through multiplicative conjunctions). 
The latter come in two versions, $\ast$ and $\stackrel{\rightarrow}{\ast}$, conveying whether a system can be split into subsystems satisfying certain properties, or extended to a larger system. These structural connectives, which are similar to those considered in logics for process algebras (for example, \cite{Dam89,CP09,CMP10,AP16}) and Separation Logic (for example, \cite{IO01,OHearn2019}).

\subsection{Time} \label{subsec:time}

In this section, we briefly sketch how to introduce an explicit notion of time into our set-up. There are several ways of achieving this. The obvious one would be to abandon the principle that behaviours are `unstructured' sets, instead imposing that they carrier a partial order $<$ which is interpreted as a `happens before' relation. In the trace model, we could then state for example that $\langle a,b\rangle<\langle a,b,c\rangle$; that is, the behaviour $\langle a,b,c\rangle$ is a \emph{continuation} of the behaviour $\langle a,b\rangle$.

It turns out, by adopting a different philosophy, that such a modification of the framework is not necessary. Instead of treating time as something inherent to a system, we think of it as being \emph{observed} from outside the system. Thus in order to introduce time to a system $f$, we should look at a larger system that includes an \emph{observer} of $f$.

\begin{definition}
    A \emph{timed implementation for $f$} is a composition of $f$ with some basic component $c$, where $f$ is an input and $c\notin\Comp{f}$ is a basic component whose behaviour set is a partial order. 
\end{definition}

 Here we think of $\Beh{c}$ as some form of bookkeeping of the behaviours of $f$: For $t,t'\in\Beh{c}$, $t< t'$ means that $c$ may experience $t$ before $t'$. Incompatible behaviours denote observations in different histories of the observer.

Let the timed implementation be given as $f\coimpl[\sigma]h\impl[\rho]c$. We can then import this ordering of events into the behaviours $x,y$ of $f$ by setting $x\leq y$ if whenever $\sigma(v)=x$ and $\sigma(w)=y$ for some behaviours $v,w\in\Beh{h}$, then we have $\rho(v)\leq\rho(w)$. Thus $x\leq y$ holds if $y$ can only be observed by $c$ after $x$. $\leq$-minimal elements in $\Beh{f}$ can be identified with initial states of $f$.

\subsection{Tableaux systems} \label{subsec:tableaux}

A substantial and theoretically important development --- beyond our present scope --- would be to provide a proof system for the logic, together with associated meta-theoretical results. One approach would be to develop a labelled tableaux system, similar to those developed for other logics involving mixes of classical and substructural connectives and modalities (see, e.g., \cite{GKP2020,Larchey-Wendling2014,pym2019resource}). 

We discuss how such a system of labelled tableaux can be developed
from standard knowledge about tableaux (rules, branchs, closure)
\cite{Fitting1990,Smullyan1968} and from previous works on labelled tableaux for BI logic and some extensions \cite{GMP2005,GKP2020,Larchey-Wendling2014}. The main questions  concern the definition of labels, of labelled formulas, of sets of label constraints, in such a way one can define closure conditions for a tableaux branch and also what is a tableaux proof for the given logic. 
As the labels and constraints have to capture the semantics of the logic then the main challenge in our setting is to use the structure of the logic's models --- here our system models --- to determine a system of labels to be used to express the constraints one has to assert or to check when tableaux rules are applied. \\

The set-up of the logic presented in Section~\ref{sec:logic} is more
complicated than previously studied logical set-ups, and providing a
labelled tableaux system for our new logic presents more challenges.  
It is necessary and non-trivial to identify an appropriate algebra of
labels that captures the semantics specified by the satisfaction  
relation 
\[
    f , \mathcal{V} , x \models \phi 
\]
as defined in Section~\ref{sec:logic}, where $f$ is a
system, $\mathcal{V}$ is a valuation, and $x\in\Beh{f}$. Unpacking
this, we can see some of the challenges. Recall that, for every basic
component $c\in\C$, we assume a set $Var(c)$ of variables, and a
\emph{valuation} is a function $\val$ mapping each $p\in Var(c)$ to a
subset of $\Beh{c}$.  \\

The tableaux rules for the  $\wedge$ and $\vee$ connectives  can have the following form:  
\[
\begin{array}{c@{\qquad}c}
\dfrac{(\mathbb{T} (\phi \wedge \psi),l)}
    {\begin{array}{c}
    (\mathbb{T} (\phi),l) \\ 
    (\mathbb{T} (\psi),l)
    \end{array}}\; \mathbb{T}\wedge 
    & \dfrac{(\mathbb{F} (\phi \wedge \psi),l)}
        {(\mathbb{F}(\phi),l) \mid (\mathbb{F}(\psi),l)} \; \mathbb{F}\wedge  
\end{array}
\]

\[
\begin{array}{c@{\qquad}c}
\dfrac{(\mathbb{T} (\phi \vee \psi),l)}
        {(\mathbb{T}(\phi),l) \mid (\mathbb{T}(\psi),l)} \; \mathbb{T} \vee 
& 
\dfrac{(\mathbb{F} (\phi \vee \psi),l)}
    {\begin{array}{c}
    (\mathbb{F} (\phi),l) \\ 
    (\mathbb{F} (\psi),l)
    \end{array}}\; \mathbb{F} \vee
\end{array}
\]
Note that there is no change of label $l$ when rules are applied. If we look at the semantics for these connectives we can propose to have $l$ being of the form $(f,x)$. 

Structural connectives, such as our $\ast$ and $\stackrel{\rightarrow}{\ast}$, typically refer to labels that  correspond to components of decomposition of the systems that support the conjunctions $\phi \ast \psi$ and $\phi \stackrel{\rightarrow}{\ast} \psi$ in Section~\ref{subsec:struct-conn}.
Tableaux systems have been given for various versions of BI (e.g.,
\cite{GMP2005,Docherty2019,GKP2020}) dealing with the connectives $\ast$ and $\wand$ . In that setting, the multiplicative conjunction, $\ast$, can have the following rules:
\[
    \dfrac{\langle \mathbb{T}(\phi \ast \psi) , l \rangle}
        { \begin{array}{c} \langle \mathbb{T}(\phi), l_1 \rangle \\ 
             \langle \mathbb{T}(\phi), l_2 \rangle \end{array}
             }
           \; S(l,l_1,l_2) \; \mathbb{T}\ast
            \quad\mbox{and}\quad           
         \dfrac{\langle \mathbb{F}(\phi \ast \psi) , l \rangle}
       {\langle \mathbb{F}(\phi), l_1 \rangle \mid \langle
        \mathbb{F}(\psi) , l_2 \rangle} \; R(l,l_1,l_2) \;
      \mathbb{F}\ast
\]

Here the rule $\mathbb{T}\ast$ introduces new labels $l_1$ and $l_2$ and the label constraint (or assertion) $S(l,l_1,l_2)$ that is satisfied and the  rule $\mathbb{F}\ast$ can be applied only if there exist $l_1$ and $l_2$ among the existing labels that satisfy the label constraint (or requirement) $R(l,l_1,l_2)$ from a specific closure of the set of assertions. The labels, their composition (with and operator $\bullet$) and the label constraints allows us to capture the relational semantics of $\ast$ (see \cite{GMP2005,DP2018,Docherty2019}).  

In order to consider $\ast$ in our logic, we can propose that $l = (f,x)$, $l_1 = (f_1,x_1)$ and $l_2 = (f_2,x_2)$, and $S(l,l_1,l_2)$ and  $R(l,l_1,l_2)$ of the form $l \ = \ l_1 \bullet l_2$. The conditions on the $f_i$s in the definition of $\ast$ are essential local and then we can expect to find a way to express the constraints and to reason on their satisfaction and resolution. 

The interface-mediated $\stackrel{\rightarrow}{\ast}$ rule requires  that the labelling algebra maintain dependencies across different branches of the tableau as specified by the condition that `$\Int{f_1}{f_2}$ is an input to $f_2$'. This is a relatively complex and delicate development of the underlying approach to defining tableaux systems, which we defer to another occasion. 

Finally modalities, such as our $\Box$, typically require a relationship on labels that corresponds to an accessibility relation on worlds. In our case, accessibility is generated  by behaviours.

\[
    \frac{\langle \mathbb{T}(\Box\phi) , l \rangle}
        {\langle \mathbb{T}(\phi) , m \rangle} 
            \quad S(l,m) \quad \mathbb{T}\Box  
    \quad\mbox{and}\quad 
    \frac{\langle \mathbb{F}(\Box\phi) , l \rangle}
        {\langle \mathbb{F}(\phi) , m \rangle} 
            \quad R(l,m) \quad \mathbb{F}\Box  
\]

Here, if we look the semantics of our logic, we could consider $l \ = \ (f,x)$ and $m \ = \ (f,y)$  with, respectively, the assertion $S(l,m)$ and the requirement $R(l,m)$ expressing a relation between $x$ and $y$ that corresponds to the accessibility relation associated to our $\Box$. 

All these points illustrate that the definitions of labels and label constraints, and their treatment and resolution during a tableau construction, require a deeper analysis and a relatively complex and delicate development in order to propose a tableaux calculus that will be proved sound and complete. Such questions will be explored in future work.

\subsection{Other}

Other future work might include systematic approaches to extensions of the theory, while staying true to the minimalistic spirit in which the present work has been carried out. For example, two general notions, which have not been discussed, are \emph{abstraction} and \emph{refinement} of systems \cite{butler2013mastering} together with \emph{substitution} (see, for example, \cite{CIP2022}) of models. The theory needed to support these ideas will necessarily require an extension of our Hennessy-Milner theorem (Theorem~\ref{thm:HM}) to include the structural connectives. The development of such a theory is a substantial project for further work. 

Other theoretical work might include the incorporation of causality concepts 
(and associated notions) 
into the minimalistic framework. Again, this is further  work. 

More applied work might  consider the integration stochastic behaviours, execution and simulations, and model-checking, so providing the basis for the framework to account for --- that is, give semantics for --- a wide range of system modelling approaches.

\section*{Acknowledgements}

This work has been partially supported by the UK EPSRC research 
grants EP/S013008/1 and EP/R006865/1. The authors are grateful 
to Tristan Caulfield for helpful discussions of the systems background to the ideas presented.

\bibliographystyle{splncs04}
\bibliography{mybib}

\appendix

\section{Proof of recovering a standard form of the CAP theorem} \label{app:CAP}

We start with an informal presentation and proof of CAP. Consider the situation where there are two replicated copies of a database. User 1 and 2 interact with their respective copy of the database by overwriting the dataset, or by requesting to read the current value of the data. In this scenario, the CAP guarantees can be formulated as follows:
\begin{itemize}
    \item[-] \emph{consistency}: Any read returns the content of the latest write (by either user) at time of the submission of the read request, or any later write.
    \item[-] \emph{availability}: Any read and write requests are eventually performed.
    \item[-] \emph{partition-tolerance}: The system tolerates arbitrary message loss between the two databases.
\end{itemize}

A semi-formal proof that these guarantees cannot be satisfied simultaneously runs as follows: the current value of the database, and assume the connection between the two copies of the database breaks down. Now imagine user~2 writing the data $\chi'\neq \chi$ on the database, followed by user~1 requesting to read the data. By the availability and partition-tolerance requirements, the system must eventually return an answer $\chi''$ to user~1. Clearly, if $\chi''\neq\chi'$ then consistency is violated as the return value is outdated. So assume $\chi''=\chi'$. But now, as the first copy of the database has lost its connection to the second one, it cannot distinguish between the situation where the write of user~2 has occurred and a situation where no such write occurred. Thus, $\chi'$ will also be the return value for user~1 in case the write didn't occur, and in that situation $\chi'$ would be an outdated value thus violating consistency. 

We now formalize this scenario to show that the impossibility result is an instance of Theorem~\ref{thm:CAP}. In fact, by writing out the proof of Theorem~\ref{thm:CAP} with the formalization presented below one essentially obtains the proof in~\cite{GL2002}. For $i=1,2$ consider a language $\Sigma_i$ that contains for every timestamp $t\in\mathbb{Q}^+_0$ and string $s$ one of the atoms below left, with their intended interpretation given on the right:
\begin{center}
\begin{tabular}{l c l}
    $t:rd^i?$ & $\ldots$ & at time $t$, user $i$ requests a read \\
    $t:rd^i(s)$ & $\ldots$ & at time $t$, user $i$ reads value $s$\\
    $t:wr^i(s)$ & $\ldots$ & at time $t$, user $i$ writes value $s$
\end{tabular}
\end{center}

A sequence in $(\Sigma_1\cup\Sigma_2)^*$ is \emph{well-formed} if its timestamps are strictly ascending. 
Let $\mathcal{S}$ denote the set of well-formed sequences, and let $\mathcal{S}_i:=\mathcal{S}\cap\Sigma_i^*$.

We consider a composition $g_1\coimpl[\sigma_1]f\impl[\sigma_2]g_2$ where the behaviour sets of $g_1,g_2$ and $f$ are $\mathcal{S}_1,\mathcal{S}_2$ and $\mathcal{S}$, respectively. The implementation $\sigma_i$ is the obvious projection function that removes from a sequence in $\mathcal{S}$ all actions not initiated by the user $i$. The \emph{current value at $t$} of a sequence in $\mathcal{S}$ is the value of the latest write in it (by either user) with timestamp $\leq t$; if no such write exists, the current value equals some fixed string $s_0$ that we call the \emph{initial value}. A sequence in $\mathcal{S}$ is \emph{consistent} if whenever $t:rd^i?$ and $t':rd^i(s)$ are a pair of actions in the sequence with $t<t'$ and no read by $i$ in between, then $s$ coincides with the current value at some timestamp $\geq t$ and $\leq t'$. Let $C$ be the set of all consistent sequences.

Define two binary relations on $\mathcal{S}$. $R$ relates any sequence to the same sequence appended with $t:wr^2(s)$, for some string $s$ (and sufficiently large $t$). $S$ relates a sequence to the same sequence appended with $t:rd^1?$, a finite string of actions by user $2$, and a final $t':rd^1(s)$, for some string $s$. So, if an implementation of $f$ satisfies strong $R$-availability, it allows user $2$ to write anything at any time. If it satisfies weak $S$-satisfiability, it allows user $1$ to send a request anytime and wait until it is 
answered.

So, clearly $C$-consistency, strong $R$-, and weak $S$-availability are are required guarantees in the CAP scenario. We will now show that these are incompatible with $(\sigma_1,\sigma_2)$-partition-tolerance. By Theorem~\ref{thm:CAP}, it suffices to demonstrate the entanglement condition. Indeed, let $w\in\mathcal{L}$ and let $x\in R(w,\bullet)$ an extension of $w$ with $t:wr^2(s)$ where $s$ is neither $s_0$ nor the value of any write in $w$. Now consider any $v\in S(x,\bullet)$. Such $v$ is of the form $v=w\smallfrown\langle t:wr^2(s),t':rd^1?,\ldots,t'':rd^1(s')\rangle$, where the dotted part contains only actions by user $2$. Let $y,z\in\mathcal{L}$ satisfy $\sigma_1(y)=\sigma_1(z)=\sigma_1(v)$, $\sigma_2(y)=\sigma_2(w)$ and $\sigma_2(z)=\sigma_2(x)$. Therefore, $y=w\smallfrown\langle t':rd^1?,t'':rd^1(s')\rangle$ and
$z=w\smallfrown\langle t:wr^2(s),t':rd^1?,t'':rd^1(s')\rangle$.
We have to show $y\notin C$ or $z\notin C$. Let us assume $z\in C$. Then $s'=s$. Since, by construction, $s$ is neither $s_0$ nor the value of any write in $w$, it follows that $y\notin C$. This demonstrates the entanglement condition.

\newpage

\end{document}